\newif\ifFull
\Fulltrue
\ifFull
\documentclass[12pt]{article}
\usepackage{fullpage}
\usepackage{amsthm}
\newtheorem{theorem}{Theorem}
\newtheorem{lemma}{Lemma}
\newtheorem{corollary}{Corollary}
\theoremstyle{definitions}

\usepackage{authblk}
\else
\documentclass[oribibl]{llncs}
\fi
\usepackage{amsmath, amssymb, algorithm, algorithmic, color}
\usepackage{enumerate, graphicx, hyperref}
\usepackage{cite}
\usepackage{wrapfig}
\graphicspath{{figures/}}

\ifFull
\else
\let\doendproof\endproof
\renewcommand\endproof{~\hfill\qed\doendproof}
\fi

\newcommand\NP{$\mathsf{NP}$}

\title{Parameterized Complexity of 1-Planarity}
\ifFull
\author[1]{Michael J. Bannister}
\author[2]{Sergio Cabello}
\author[1]{David Eppstein}
\affil[1]{Department of Computer Science, University of California, Irvine, USA}
\affil[2]{Faculty of Mathematics and Physics, University of Ljubljana, Slovenia}
\else
\author{Michael J. Bannister\inst{1}  \and Sergio Cabello\inst{2} \and David Eppstein\inst{1}}
\institute{Department of Computer Science, University of California, Irvine, USA \and
Faculty of Mathematics and Physics, University of Ljubljana, Slovenia}
\fi

\begin{document}
\maketitle

\begin{abstract}
We consider the problem of finding a $1$-planar drawing for a general graph, where a $1$-planar drawing is a drawing in which each edge participates in at most one crossing. Since this problem is known to be \NP-hard we investigate the parameterized complexity of the problem with respect to the vertex cover number, tree-depth, and cyclomatic number. For these parameters we construct fixed-parameter tractable algorithms. However, the problem remains \NP-complete for graphs of bounded bandwidth, pathwidth, or treewidth.
\end{abstract}

\pagestyle{plain}

\section{Introduction}
1-planar graphs (the graphs that can be drawn in the plane with at most one crossing per edge) were introduced by Ringel in 1965~\cite{Rin-AMSUH-65} and have since been extensively studied from the point of view of basic properties such as their colorings~\cite{Bor-MDA-84,CheKou-Alg-05}, edge density~\cite{Sch-MN-86,PacTot-Comb-97,BraEppGle-GD-12}, characterization by forbidden subgraphs~\cite{Kor-DM-08,KorMoh-JGT-13}, and embeddings on nonplanar surfaces~\cite{Suz-DM-10}. In graph drawing, 1-planarity has more recently become of interest, as a way of generalizing planar drawings in a controlled way that does not lead to too much visual complexity. Works in this area have compared 1-planarity to other forms of controlled crossings such as RAC (right-angle-crossing) graphs~\cite{EadLio-GD-11}, found an algorithmic characterization of the 1-planar drawings that can be straightened to have all edges represented by straight line segments~\cite{HonEadLio-COCOON-12}, and studied the transformation of rotation systems into 1-planar drawings~\cite{EadHonKat-GD-12}. However, until now there have been no published algorithms for finding 1-planar drawings of arbitrary graphs. Unfortunately, testing 1-planarity is \NP-hard in general~\cite{GriBod-Algo-07,KorMoh-JGT-13}, even for graphs obtained from planar graphs by adding a single edge~\cite{CabMoh-arxiv-12}, so we cannot expect it to be solved by an algorithm whose running time is a polynomial of the input size.

Because of the difficulty of recognizing 1-planar drawings, and their usefulness in graph drawing, it becomes of interest to study the complexity of algorithms for testing 1-planarity that are not fully polynomial. An important tool for this sort of study is \emph{parameterized complexity}~\cite{DowFel-99,FluGro-06}, according to which we seek additional numeric parameters (other than the numbers of edges and vertices) that measure the complexity of an input graph, and seek algorithms whose running time is the product of a polynomial in the input size and a non-polynomial function of the other parameter or parameters. If this can be accomplished, the result will in general be an algorithm that solves the problem correctly on all graphs, that can be relied on to be efficient for graphs that have small values of the parameter, and that has a performance that degrades gracefully as the parameter increases.

In this paper we study for the first time the parameterized complexity of 1-planarity. We provide a fixed-parameter tractable algorithm for the problem when it is parameterized by the cyclomatic number (the minimum number of edges that must be removed from the graph to make a forest) or the tree-depth. For a third parameter, the vertex cover number, we show that even more efficient FPT algorithms are possible, based on a polynomial kernel (a transformation of any instance to an equivalent instance with size polynomial in the parameter). However, as we show in an appendix, the problem remains \NP-complete for graphs of bounded bandwidth; therefore, it is unlikely that there exists a fixed-parameter tractable algorithm for 1-planarity when parameterized by bandwidth, pathwidth, treewidth, or clique-width. 

Although our primary motivation is in understanding the complexity of 1-planarity, our research on the vertex cover and tree-depth parameters has a secondary purpose as well, in exploring the circumstances in which general theorems that guarantee the existence of an inexplicit FPT algorithm (with unknown dependence on the parameter) can be made explicit. It is known that the graphs of bounded vertex cover number, and the graphs of bounded tree-depth, are well-quasi-ordered under induced subgraphs~\cite{NesOss-12}. This means that for any graph recognition problem closed under induced subgraphs (as 1-planarity is), and for any fixed bound on vertex cover or tree-depth, there is a finite set of forbidden induced subgraphs that can be used to characterize the problem, and a linear time recognition algorithm. However, the theorems that prove these results do not imply any computable bound on the size of these forbidden subgraphs or on the dependence on the parameter of these linear time algorithms. In contrast, for 1-planarity with these parameters we provide algorithms whose dependence on the parameter is known, explicit, and computable (albeit impractically large).

\section{Vertex cover number}

The \emph{vertex cover number} $k$ of an undirected graph $G$ is the minimum number of vertices needed to touch all of the edges of $G$. This number is central to the theory of parameterized complexity, to the point where Guo et al. call it ``the \emph{Drosophila} of fixed-parameter algorithmics''~\cite{GuoNieWer-WADS-05}. After much earlier work on the problem, the best fixed-parameter tractable algorithms for computing the vertex cover number, parameterized by this number, take time $O(1.2738^k+kn)$~\cite{CheKanXia-TCS-10}. We will show that, when parameterized by vertex cover number, 1-planarity is also fixed-parameter tractable, using a standard technique, kernelization, whereby we replace an instance graph by an equivalent instance of size bounded by a function of the kernel. Although the vertex cover number is a weaker parameter than the tree-depth that we consider later (a graph of vertex cover number $k$ has tree-depth at most $k+1$), we begin with this parameter for two reasons: (1) for this parameter we achieve stronger results, namely a polynomial kernel, than we do for the other parameters that we consider, and (2) the simplicity of this case makes it an appropriate warm-up for the other parameters.

\begin{lemma}[Czap and Hud\'ak~\cite{CzaHud-DAM-12}]
\label{lem:bipartite}
A complete bipartite graph is 1-planar if and only if it is of the form $K_{1,n}$, $K_{2,n}$, $K_{3,i}$ for $i\in\{3,4,5,6\}$, or $K_{4,4}$.
\end{lemma}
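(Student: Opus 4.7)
The plan is to split the proof into two directions. For the ``if'' direction, I would exhibit explicit 1-planar drawings of each of the listed graphs. $K_{1,n}$ and $K_{2,n}$ are planar, and $K_{3,3}$ has the well-known drawing with a single crossing between two of its edges. For $K_{3,4}$, $K_{3,5}$, $K_{3,6}$, and $K_{4,4}$, explicit drawings can be constructed by placing the smaller part symmetrically and routing the remaining edges so that each edge participates in at most one crossing; $K_{4,4}$ in particular admits a symmetric 1-planar drawing in which the two parts lie on two concentric squares rotated by $45^{\circ}$.

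For the ``only if'' direction, observe that 1-planarity is closed under taking subgraphs, so by monotonicity it suffices to rule out the two ``minimal'' non-members of the list, namely $K_{3,7}$ and $K_{4,5}$: any other $K_{m,n}$ not of the allowed form contains one of these as a subgraph. The case $K_{4,5}$ follows from the edge-density bound for bipartite 1-planar graphs: any bipartite 1-planar graph on $n \geq 4$ vertices has at most $3n-8$ edges, whereas $K_{4,5}$ has $20 > 3 \cdot 9 - 8 = 19$ edges. This density bound comes from the Euler formula applied to the planarization of a 1-planar drawing, together with the fact that in a bipartite setting every face of the planarization has length at least $4$ (so that one can charge crossings against face-length deficits appropriately).

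The main obstacle is $K_{3,7}$, for which the density bound is not tight enough: $K_{3,7}$ has $21$ edges against an allowed $3\cdot 10 - 8 = 22$, so a purely counting argument fails. A structural argument is needed. The approach I would take is to fix a hypothetical 1-planar drawing of $K_{3,7}$, consider the subdrawing induced by one vertex $a$ of the small part together with its $7$ incident edges---which forms a star partitioning the plane into $7$ sectors---and then analyze where the other two vertices $b$ and $c$ of the small part can lie and how their $14$ remaining incident edges can be routed while respecting the one-crossing-per-edge constraint. A pigeonhole argument on the $7$ vertices of the large part, combined with the bounded number of local configurations around each such vertex, should force two of them to lie in a sector in which the only possible completion of their incident edges violates 1-planarity. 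Because this case analysis is intricate and involves many subcases, the cleanest route in the writeup is to invoke the detailed argument of Czap and Hud\'ak directly.
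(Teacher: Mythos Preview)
The paper does not prove this lemma at all: it is quoted verbatim as a result of Czap and Hud\'ak and used as a black box, so there is no in-paper argument to compare your proposal against.

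Your sketch is a reasonable outline of how such a characterization is established, with one caveat. The reduction of the ``only if'' direction to excluding $K_{3,7}$ and $K_{4,5}$ is correct, and the honest acknowledgment that $K_{3,7}$ needs a structural case analysis (and that you would cite Czap--Hud\'ak for it) is appropriate. However, your justification of the $3n-8$ bound for bipartite 1-planar graphs is too glib: in the planarization of a 1-planar drawing, the new crossing vertices are not part of the bipartition and \emph{do} create triangular faces, so ``every face of the planarization has length at least $4$'' is false as stated. The bound $3n-8$ is correct, but its proof (due to Karpov) requires a more careful discharging or case analysis around crossing vertices; if you want to use it, cite it rather than claim it follows directly from Euler plus bipartiteness. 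Alternatively, one can dispose of $K_{4,5}$ by the same kind of ad hoc structural argument as for $K_{3,7}$, which is closer to what Czap and Hud\'ak actually do.
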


\begin{lemma}
\label{lem:exact-algorithm}
Testing 1-planarity of an $n$-vertex graph $G$ takes time $2^{O(n)}$.
\end{lemma}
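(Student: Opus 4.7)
The plan is to enumerate, in $2^{O(n)}$ time, all combinatorial candidates for a 1-planar drawing of $G$ and check each candidate against $G$.

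I begin with the Pach--T\'oth density bound: any 1-planar graph on $n$ vertices has at most $4n-8$ edges, so I reject $G$ in linear time whenever $|E(G)|$ exceeds this. Thereafter $|E(G)|=O(n)$ and any 1-planar drawing of $G$ has at most $2n-4$ crossings. The key observation is that a 1-planar drawing is recorded combinatorially by its \emph{planarization} $H$---the plane graph obtained by replacing each crossing with a degree-4 vertex whose four incidences alternate between the two crossed edges---together with the designated set $X\subseteq V(H)$ of crossing vertices. Then $|V(H)|\le 3n-8$, and un-subdividing $H$ along $X$ (merging opposite pairs of incidences at each $x\in X$) yields $G$; conversely, any such $(H,X)$ whose un-subdivision is isomorphic to $G$ certifies 1-planarity of $G$.

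The main enumeration step then lists all pairs $(H,X)$ with $|V(H)|\le 3n-8$. By Tutte-style enumeration of planar maps the number of combinatorial plane graphs on $N$ vertices is $2^{O(N)}$, yielding $2^{O(n)}$ candidate plane graphs, each generable in polynomial time; for each $H$ there are at most $2^{|V(H)|}=2^{O(n)}$ subsets $X$ to try. For each pair I un-subdivide to form $G'=H/X$ and test whether $G'\cong G$.

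The main obstacle is this final isomorphism check: a naive $n!$ bijection search would blow the time budget. I would avoid the factorial factor by exploiting that $G'$ is presented with an explicit plane embedding of $H$, so a canonical labeling of $G'$ (respecting the marked set $X$) can be computed in polynomial time via planar graph canonization. A canonical labeling of $G$ itself can be computed once, in $2^{o(n)}$ time, by a subexponential general graph-isomorphism algorithm (for instance the Luks--Zemlyachenko bound $2^{O(\sqrt{n\log n})}$); comparing the two canonical forms then takes linear time. The total running time is therefore $2^{O(n)}\cdot 2^{o(n)}\cdot\mathrm{poly}(n)=2^{O(n)}$, as required.
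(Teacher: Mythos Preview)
Your overall strategy---enumerate all candidate planarizations and test each against $G$---is sound and yields a genuinely different proof from the paper's. The paper instead runs a divide-and-conquer: it shows that any 1-planar drawing admits a balanced separating closed curve through $O(\sqrt n)$ vertices (obtained from a cycle separator of the planarization, with a local rerouting at each crossing it passes through), then guesses this curve and the induced bipartition of the edge set, and recurses on each side with the separator cycle marked as uncrossable. Your route avoids the separator machinery entirely, but it imports two outside ingredients---the $2^{O(N)}$ enumeration of rooted planar maps and a subexponential graph-isomorphism algorithm---whereas the paper's argument is essentially self-contained once Miller's cycle-separator theorem is in hand.

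There is, however, a real slip in your isomorphism step. You compute a canonical labeling of $G'$ ``via planar graph canonization'' applied to $(H,X)$, and a canonical labeling of $G$ via a general GI canonizer, and then compare the two outputs. But canonical forms produced by \emph{different} canonization procedures are not comparable: each procedure guarantees only that isomorphic inputs map to identical outputs \emph{under that same procedure}, not that its output coincides with some universal normal form. So when $G'\cong G$, the two labeled outputs will in general differ, and your comparison will produce false negatives---you may wrongly declare a 1-planar $G$ to be non-1-planar. The fix is immediate: drop the planar-canonization shortcut and apply the same subexponential GI (or canonization) routine to every candidate $G'$ as well. With $2^{O(n)}$ candidates and $2^{o(n)}$ time per test, the total remains $2^{O(n)}$. (A minor arithmetic point: the vertex bound on the planarization is $n+(2n-4)=3n-4$, not $3n-8$; as written you would miss some valid planarizations.)
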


\begin{proof}
If $G$ has more than $4n$ edges, we return that 
it is not 1-planar~\cite{PacTot-Comb-97}. 
Otherwise, we proceed with a divide and conquer algorithm:
the existence of cycle separators~\cite{Mil-JCSS-86} implies that in any 1-planar 
drawing there is a curve passing through $O(\sqrt{n})$ vertices that
separates the drawing into two balanced parts. We can then solve the problem
by solving $2^{O(n)}$ subproblems, each of them smaller
by a constant fraction. A detailed proof is in Appendix~\ref{app:lemma2}.
\end{proof}

\begin{lemma}
\label{lem:K2i-planar}
Let $G$ be a 1-planar graph, with a subgraph $H$ of the form $K_{2,i}$ formed by $i$ vertices of degree two, all with the same two neighbors. Then $G$ has a 1-planar drawing in which the induced drawing of $H$ is planar.
\end{lemma}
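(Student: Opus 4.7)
The plan is to start with an arbitrary 1-planar drawing of $G$ and modify it so that the induced drawing of $H$ is planar, while preserving 1-planarity throughout. First I would delete the degree-2 vertices $w_2,\dots,w_i$ together with their incident edges, leaving a 1-planar drawing of $G \setminus \{w_2,\dots,w_i\}$ in which the path $P = u w_1 v$ is drawn as a single curve from $u$ to $v$. By 1-planarity of the original drawing, each of the two edges of $P$ is crossed at most once, so the curve $P$ meets edges of $G \setminus H$ in at most two points.

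Next I would choose a tubular neighborhood $T$ of $P$ thin enough that it contains no other vertex of $G$ and meets $G \setminus H$ only at the (at most two) edges crossing $P$. Inside $T$, I would reinsert the remaining paths $P_j = u w_j v$ for $j = 2,\dots,i$ as $i-1$ pairwise non-crossing curves from $u$ to $v$, nested on one side of $P$, each containing one of the new degree-2 vertices $w_j$. Together with $P$, these curves constitute a planar drawing of $K_{2,i}$, so the induced drawing of $H$ is planar by construction.

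The main obstacle is preserving 1-planarity, because an external edge $e \in G \setminus H$ that originally crossed $P$ at a point $q$ could, in a naive parallel-copy construction, end up crossing every one of the nested new paths, violating the one-crossing-per-edge constraint. I would address this by locally rerouting $e$ inside $T$ so that it still has exactly one crossing with the edges of $H$: instead of crossing $P$ near $q$, the new drawing of $e$ is routed so that it enters $T$ at the boundary, passes between two consecutive nested paths, and crosses only the single path placed on the relevant side (for instance the outermost one), detouring around the remaining paths through a thin sliver of $T$. Since $e$ has only two endpoints and they lie on opposite sides of $P$ near $q$, such a rerouting is always possible. Because every modification occurs inside $T$, which was chosen to avoid all other vertices and non-participating edges of $G$, no new crossings are introduced elsewhere, and the resulting drawing is 1-planar with a planar induced drawing of $H$.
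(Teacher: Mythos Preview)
Your rerouting step contains a genuine topological gap. Once you have inserted the nested paths $P_1=P,\,P_2,\dots,P_i$ from $u$ to $v$ inside the tubular neighborhood $T$, each $P_j$ is a simple arc joining the two boundary points $u,v$ of the disk $T$, so each $P_j$ separates $T$ into two components. The portion of the external edge $e$ inside $T$ runs between two boundary points $a,b\in\partial T$ that lie on opposite sides of $P_1$ (that is precisely what ``$e$ crosses $P$ once at $q$'' means), and since the $P_j$ are nested between $P_1$ and one boundary arc of $T$, the points $a$ and $b$ in fact lie on opposite sides of \emph{every} $P_j$. Consequently any curve from $a$ to $b$ that stays inside $T$ must cross each of the $i$ paths at least once; there is no ``thin sliver of $T$'' that detours around the remaining paths, because those paths all meet at $u$ and $v$, which sit on $\partial T$. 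Rerouting $e$ outside $T$ instead is uncontrolled and may create new crossings with the rest of the drawing. So your construction produces a planar drawing of $H$ but not a 1-planar drawing of $G$.

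The paper sidesteps this obstruction by arguing differently: take a 1-planar drawing of $G$ minimizing the total number of crossings. If two edges of $H$ sharing an endpoint cross, one can uncross them locally and lose a crossing. If two non-adjacent edges, say $uw_j$ and $w_kv$, cross at a point $p$, then by 1-planarity the half-arcs from $p$ to $u$ and from $p$ to $v$ carry no further crossings, so one may redraw \emph{all} of $H$ planarly in a neighborhood of these two arcs, again strictly reducing the crossing count. Hence in a crossing-minimal 1-planar drawing no two edges of $H$ cross, and nothing needs to be reinserted.
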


\begin{proof}
If two edges of $H$ that share an endpoint cross then we can uncross them, resulting in a drawing with fewer crossings, and if two non-incident edges of $H$ cross each other then we can redraw all of $H$ without crossings near the previous position of these two crossed edges, again reducing the total number of crossings. Therefore, a 1-planar drawing of $G$ that minimizes the total number of crossings has the desired property.
\end{proof}

\begin{figure}[t]
\ifFull
\centering\includegraphics[width=4in]{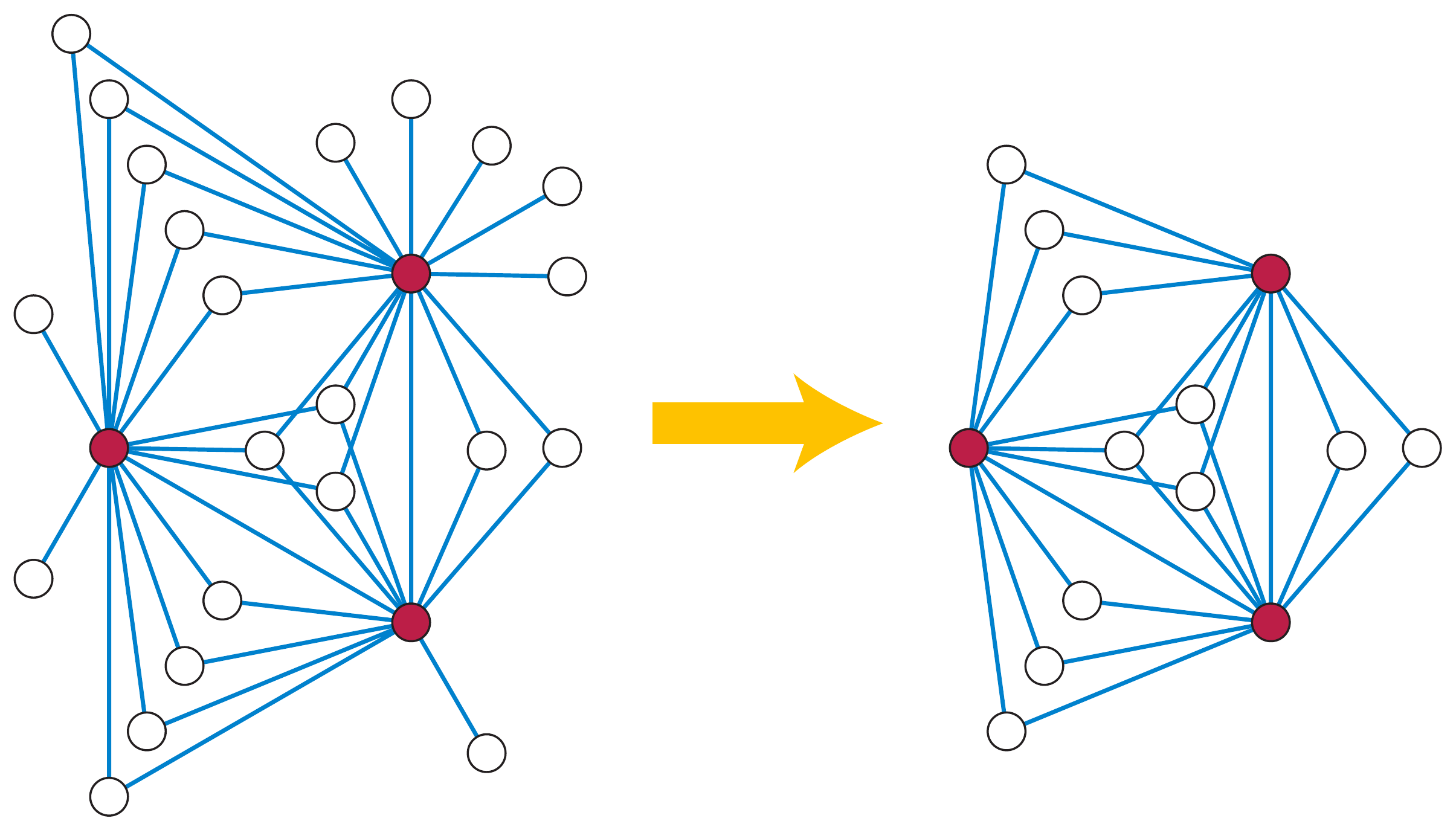}
\else
\centering\includegraphics[width=2.5in]{cover-kernel}
\fi
\caption{Kernelization for vertex cover number $k$: remove degree-one vertices, and reduce each $K_{2,i}$  subgraph (with two cover vertices on one side of the bipartition) to $K_{2,\min\{i,2k-3\}}$. Here $k=3$, so the $K_{2,i}$ subgraphs are reduced to $K_{2,3}$.}
\label{fig:cover-kernel}
\end{figure}

\begin{lemma}
\label{lem:vc-kernel}
Let a graph $G$ have a known vertex cover $C$ of size $|C|=k$. Then in time $O(n)$ we can transform $G$ into an kernel $G_C$ of size $O(k^2)$ such that $G$ is 1-planar if and only if $G_C$ is 1-planar. A 1-planar drawing of $G_C$ may be transformed into a 1-planar drawing of $G$ in linear time.
\end{lemma}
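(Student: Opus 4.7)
I would define $G_C$ by applying the two reductions suggested by Figure~\ref{fig:cover-kernel}. First, delete every degree-one vertex of $G$. Second, for each pair $\{a,b\} \subseteq C$, let $V_{ab}$ be the set of vertices of $V \setminus C$ whose neighborhood equals $\{a,b\}$; if $|V_{ab}| > 2k-3$, delete all but $2k-3$ of them. Because $V \setminus C$ is independent and each outside vertex has all of its neighbors in $C$, one can bucket-sort the outside vertices by their neighborhoods and perform both deletions in $O(n)$ total time.

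\textbf{Size.} The cover contributes $k$ vertices. Among outside vertices that survive, those of neighborhood size exactly $2$ contribute at most $(2k-3)\binom{k}{2}$ in total. For outside vertices of neighborhood size at least $3$, Lemma~\ref{lem:bipartite} gives sharp restrictions: if $G$ contains any of $K_{3,7}$, $K_{4,5}$, or $K_{5,3}$ as a subgraph then $G$ is not 1-planar and the algorithm can reject immediately, while otherwise at most a constant number of outside vertices can share any given triple of cover vertices as common neighbors. An elementary charging then yields the claimed $O(k^2)$ bound on $|V(G_C)|$.

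\textbf{Equivalence.} The forward direction is immediate: $G_C$ is a subgraph of $G$, and 1-planarity is preserved when passing to a subgraph by restricting the drawing. For the backward direction, I would take a 1-planar drawing of $G_C$ with the minimum possible total number of crossings. Lemma~\ref{lem:K2i-planar} then guarantees that every retained $K_{2,2k-3}$ subgraph is drawn planarly, carving out $2k-3$ quadrilateral faces. Since the $k-2$ other cover vertices occupy at most $k-2$ of these faces, at least $k-1$ faces contain no cover vertex at all. I would insert all removed twins of type $\{a,b\}$ into one such free face, drawing their edges along the boundary arc from $a$ to $b$ without new crossings; each removed degree-one vertex is then reinserted near its unique neighbor inside any incident face, again without crossings.

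\textbf{Main obstacle.} The delicate step is verifying, in the backward direction, that a face of the planar $K_{2,2k-3}$ drawing really admits new twins. Edges connecting cover vertices in different faces, and twins of types $\{a,c\}$ or $\{b,c\}$ anchored in external faces, can cross into a face of $K_{2,2k-3}$ and thereby consume the single permitted crossing on one of its boundary edges. The constant $2k-3$ is calibrated so that, after accounting for every cover vertex and every such piece of cross-boundary structure, at least one quadrilateral face remains entirely clean and can accommodate arbitrarily many excess twins of type $\{a,b\}$.
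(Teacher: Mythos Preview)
Your reduction and the forward direction are fine, but there are two genuine gaps.

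\textbf{Size bound.} As written, your degree-two count is $(2k-3)\binom{k}{2}=O(k^3)$, not $O(k^2)$, and your charging for degree-$\ge 3$ vertices via ``constant per triple'' likewise runs into $\binom{k}{3}$ triples. The paper gets $O(k^2)$ by first using the Pach--T\'oth bound for drawings with at most two crossings per edge: smoothing the degree-two vertices shows that at most $5k$ pairs from $C$ can be joined through $G\setminus C$, so only $O(k)$ of the sets $V_{ab}$ are nonempty, and similarly only $O(k)$ ``groups'' of high-degree outside vertices occur, each of size $O(k)$. Without this edge-density step you only obtain an $O(k^3)$ kernel, which is enough for fixed-parameter tractability but not for the lemma as stated.

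\textbf{Backward direction.} You correctly identify the obstacle but do not resolve it. Knowing that some quadrilateral face of the planar $K_{2,2k-3}$ contains no cover vertex does \emph{not} make its boundary edges uncrossed: an edge with one endpoint at $a$ or $b$ and the other endpoint outside the face can still cross a boundary edge, and your final sentence is only an assertion that $2k-3$ is ``calibrated'' to avoid this, with no argument. The paper's key idea is different and sidesteps the difficulty: among $2k-3$ cyclically arranged faces with at most $k-2$ of them occupied by cover vertices, two \emph{adjacent} faces must both be empty. The two edges $e,f$ forming the path $a\,v_i\,b$ that separates these two faces are then automatically uncrossed, because any edge crossing $e$ or $f$ would need an endpoint in the interior of each of the two empty faces (it cannot cross twice), and such an endpoint would have to lie in $C$. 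The deleted twins are then redrawn as paths parallel to the uncrossed path $ef$, not inside a single clean face. This adjacency argument is the missing step in your proposal.
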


\begin{proof}
\ifFull
If $G$ is to be 1-planar, there cannot be $5k$ two-edge paths connecting distinct pairs of vertices of $C$ through a different vertex of $G\setminus C$; otherwise smoothing out the internal vertices of those $5k$ paths we would obtain a drawing of a graph with $k$ vertices, $5k$ edges, and two crossings per edge, contradicting the bound of~\cite{PacTot-Comb-97}: If a graph on $n$ vertices has a a drawing with at most 2 crossings per edge, then it has at most $5n-10$ edges. This means that we can classify the vertices of $G\setminus C$ with degree three or more into $O(k)$ groups, according to two (arbitrary) neighbors in $C$. Moreover, assuming that $G$ is 1-planar, each of those groups has at most $6k$ vertices; otherwise some other common neighbor is repeated at least 7 times and thus $G$ contains a $K_{3,7}$. It follows that $G\setminus C$ has at most $O(k^2)$ vertices of degree three or more, if $G$ is 1-planar.
\else
Delete any vertices of degree one in $G\setminus C$; this cannot change 1-planarity.
If $G$ is to be 1-planar, there are at most $5k$ two-edge paths connecting distinct pairs of vertices of $C$ through a different vertex of $G\setminus C$; otherwise smoothing out the internal vertices of those $5k$ paths we would contradict the bound on~\cite{PacTot-Comb-97} for drawings with at most two crossings per edge. Moreover, in $G\setminus C$ there are at most $6k$ vertices of degree three or more sharing any two fixed neighbors in $C$, as otherwise $G$ contains a $K_{3,7}$. It follows that $G\setminus C$ has $O(k^2)$ vertices of degree three or more, if $G$ is 1-planar.
\fi
 
The vertices of degree
two in $G\setminus C$ can be grouped by radix sort according to the identities of their two neighbors in $C$, forming a collection of $K_{2,i}$ subgraphs. 
If $G$ is 1-planar, there are $O(k)$ such subgraphs $K_{2,i}$ because each of them
gives a two-edge path connecting two distinct vertices of $C$. 
If one of these $K_{2,i}$ subgraphs has $i > 2k-3$ then we claim that $G$ is 1-planar if and only if the subgraph $G'$ formed by deleting $i-(2k-3)$ vertices within this subgraph to form a smaller $K_{2,2k-3}$ subgraph is also 1-planar. In one direction, if $G$ is 1-planar, then clearly so is $G'$. In the other direction, suppose $G'$ is 1-planar; then by Lemma~\ref{lem:K2i-planar} it has a 1-planar drawing in which the given $K_{2,2k-3}$ subgraph is drawn planarly, with $2k-3$ quadrilateral faces. Two adjacent faces among this set of $2k-3$ must be empty of the $k-2$ vertices of $C$ that are not part of the $K_{2,2k-3}$ subgraph. Therefore, the two edges $e$ and $f$ separating these two faces cannot be crossed by any edge of the 1-planar drawing, for any crossing edge would either have to cross entirely across one of these two faces (violating 1-planarity) or have an endpoint in each of the two faces (violating the assumption that neither of these faces contains a vertex of $C$). The remaining vertices and edges of $G$ that were deleted to form the $K_{2,2k-3}$ subgraph may be added to the drawing, near path $ef$, without violating 1-planarity, showing as desired that $G$ is 1-planar.

Performing this replacement of $K_{2,i}$ by $K_{2,\min(i,2k-3)}$ separately for each of the groups of vertices in $G\setminus C$ results in the desired kernel $G_C$.  $G_C$ has $O(k^2)$ vertices of high degree and $O(k)$ groups of $O(k)$ vertices in $K_{2,i}$ subgraphs, for a total of $O(k^2)$ vertices.

If a drawing of $G_C$ is found, a corresponding drawing of $G$ may be found by eliminating crossings between pairs of edges belonging to the same $K_{2,i}$ subgraphs in $G_C$, finding an uncrossed length-two path with two vertices in $C$ as path endpoints within each $K_{2,2k-3}$ subgraph,  expanding each of these $K_{2,2k-3}$ subgraphs to $K_{2,i}$ for the correct value of $i$ from the original graph $G$ (placing the restored vertices near the uncrossed path), and finally adding back any deleted degree-one vertices of $G$.
\end{proof}

An example of this kernelization is depicted in Figure~\ref{fig:cover-kernel}, for a graph with vertex cover number three.

\begin{theorem}
\label{thm:cover-fpt}
We can test the 1-planarity of a given $n$-vertex graph, parameterized by its vertex cover number $k$, in time $O(n+2^{O(k^2)})$.
\end{theorem}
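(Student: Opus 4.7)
The plan is to combine the three preceding lemmas into a standard ``compute-cover, kernelize, brute-force'' pipeline, and check that the running time adds up as claimed.

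First I would compute a vertex cover $C$ of size $k$ (assuming $G$ admits one; otherwise we report that the parameter is larger than $k$). Using the algorithm of Chen, Kanj, and Xia cited in the introduction~\cite{CheKanXia-TCS-10}, this step runs in time $O(1.2738^k + kn)$, which is comfortably absorbed into $O(n + 2^{O(k^2)})$. Any simpler $O(2^k \cdot \mathrm{poly}(n))$ FPT algorithm for vertex cover would also suffice here, since $2^k$ is certainly $2^{O(k^2)}$.

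Next I would invoke Lemma~\ref{lem:vc-kernel} on $(G, C)$ to obtain, in time $O(n)$, an equivalent instance $G_C$ with $|V(G_C)| = O(k^2)$ such that $G$ is 1-planar if and only if $G_C$ is 1-planar. Finally I would apply the exact algorithm from Lemma~\ref{lem:exact-algorithm} to $G_C$; since this algorithm runs in $2^{O(|V(G_C)|)}$ time on an $n$-vertex graph, on $G_C$ it takes time $2^{O(k^2)}$. The total running time is
\[
O(1.2738^k + kn) + O(n) + 2^{O(k^2)} \;=\; O(n + 2^{O(k^2)}),
\]
as required. Correctness is immediate from the equivalence guaranteed by Lemma~\ref{lem:vc-kernel}.

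There is essentially no serious obstacle once the earlier lemmas are in hand; the only thing to be slightly careful about is that the $O(n)$ kernelization time in Lemma~\ref{lem:vc-kernel} presupposes that the cover $C$ is already known, which is exactly why the vertex-cover computation must be the first step. If one also wanted an actual 1-planar drawing of $G$ (not just a decision), the final sentence of Lemma~\ref{lem:vc-kernel} says that a drawing of $G_C$ lifts back to a drawing of $G$ in linear time, so the same overall time bound yields a constructive algorithm as well.
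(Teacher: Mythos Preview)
Your pipeline---compute a cover, kernelize via Lemma~\ref{lem:vc-kernel}, then brute-force via Lemma~\ref{lem:exact-algorithm}---is exactly the paper's approach, and the correctness argument is fine. However, the running-time accounting has a real gap: the $O(kn)$ term from the Chen--Kanj--Xia algorithm is \emph{not} absorbed into $O(n + 2^{O(k^2)})$. In fact $kn \ne O(n + f(k))$ for \emph{any} function $f$: given a putative constant $C$ with $kn \le Cn + f(k)$, fix $k = C+1$ and let $n \to \infty$; then $(C+1)n$ eventually exceeds $Cn + f(C+1)$. So from your analysis you only obtain $O(kn + 2^{O(k^2)})$, not the claimed additive bound. (This is the standard distinction between ``additive'' FPT time $O(n + f(k))$ and ``multiplicative'' FPT time $O(f(k)\cdot n)$; the former is strictly stronger.)

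The paper addresses precisely this point with an extra preprocessing step before invoking the vertex-cover algorithm. First reject if $G$ has more than $4n-8$ edges (then $G$ is not 1-planar), so $m = O(n)$. Next find a maximal matching $M$ in $O(n)$ time, giving the $2$-approximation $2|M|$ to $k$; remove every vertex of degree greater than $2|M|$ (each such vertex must lie in every optimum cover). The remaining graph $G'$ has only $O(k^2)$ edges, since otherwise the surviving low-degree vertices could not cover it. Running the vertex-cover algorithm on $G'$ (after discarding isolated vertices) now costs $O(1.2738^{k} + k\cdot k^2) = 2^{O(k)}$, and the total becomes a genuine $O(n + 2^{O(k^2)})$. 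Adding this preprocessing step to your write-up closes the gap.
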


\begin{proof}
Apply an FPT algorithm to find an optimal vertex cover $C$, apply Lemma~\ref{lem:vc-kernel} to replace $G$ with a kernel $G_C$ of size $O(k^2)$, in linear time, and then apply Lemma~\ref{lem:exact-algorithm} to this kernel.

To reduce the dependence on $n$ in the time for the initial vertex cover step from $O(kn)$ to $O(n)$, we abort the algorithm if the input has more than $4n-8$ edges, and otherwise apply a standard kernelization for vertex cover: find a maximal matching $M$ in $G$, and
\ifFull
use $2|M|$ as a 2-approximation to the vertex cover. Find
\else
find
\fi
all vertices of degree greater than $2|M|$; these must all belong to the optimal vertex cover, and can be removed from $G$, leaving a smaller graph $G'$ that has $O(k^2)$ edges (otherwise it could not be covered by the remaining low-degree vertices). Apply the vertex cover algorithm to $G'$ instead of to $G$.
\end{proof}

\begin{corollary}
We can test 1-planarity for split graphs in time $O(n)$.
\end{corollary}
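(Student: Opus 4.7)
The plan is to reduce the split-graph case directly to Theorem~\ref{thm:cover-fpt} by showing that every 1-planar split graph has vertex cover number bounded by an absolute constant. Recall that a split graph partitions as $V = K \cup I$ where $K$ induces a clique and $I$ is independent; since no edge lies inside $I$, the set $K$ is automatically a vertex cover of $G$. By the Pach--T\'oth density bound, a 1-planar graph on seven vertices has at most $4\cdot 7 - 8 = 20$ edges, so $K_7$ (which has $21$ edges) is not 1-planar. Hence if $|K|\ge 7$ then $G\supseteq K_7$ is not 1-planar, while otherwise $K$ is a vertex cover of size at most $6$. Here I am using that 1-planarity is closed under subgraphs, since restricting a 1-planar drawing of $G$ to the edges of a subgraph yields a 1-planar drawing of that subgraph.

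The algorithm then proceeds as follows. First I would reject immediately if $G$ has more than $4n-8$ edges, so that we may assume $m = O(n)$. Next I would invoke a standard linear-time split-graph recognition routine to produce a partition $V = K\cup I$ (and reject if $G$ is not split). If $|K|\ge 7$ we output ``not 1-planar.'' Otherwise we feed $G$ together with the explicit vertex cover $C = K$ of size $k\le 6$ into Lemma~\ref{lem:vc-kernel}, skipping the FPT vertex-cover subroutine of Theorem~\ref{thm:cover-fpt} since we already have a constant-size cover in hand. This produces a kernel on $O(k^2) = O(1)$ vertices, and the exact algorithm of Lemma~\ref{lem:exact-algorithm} then decides 1-planarity of the kernel in $2^{O(k^2)} = O(1)$ time. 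Summing contributions gives $O(n)$ overall.

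There is no genuine obstacle beyond the observation bounding $|K|$: all of the technical work has already been done in Theorem~\ref{thm:cover-fpt}, and the dependence on the parameter disappears into a constant. The only item that merits care is extracting a split partition in linear time on an input that is only promised to be a split graph; here one can either use the classical linear-time recognition of Hammer and Simeone, or even observe that after the $4n-8$ edge check the graph has $O(n)$ edges, so any standard $O(n+m)$ recognition procedure suffices.
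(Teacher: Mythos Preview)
Your proposal is correct and follows essentially the same approach as the paper: observe that a split graph either contains $K_7$ (hence is not 1-planar) or admits a vertex cover of size at most six, and then invoke Theorem~\ref{thm:cover-fpt}. The paper's proof is a two-line version of exactly this argument; your additional implementation details (the $4n-8$ edge check, linear-time split recognition, and bypassing the FPT vertex-cover subroutine) are reasonable elaborations but do not change the underlying idea.
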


\begin{proof}
If a given split graph has a clique of size seven, it is not 1-planar, and otherwise, it has a vertex cover of size six and we use the above algorithm.
\end{proof}

\section{Tree-depth}

As we now show, 1-planarity parameterized by tree-depth may be tested by an FPT algorithm.
The \emph{tree-depth} of a graph $G$ is the smallest depth of a forest $F$ on the same vertex set as $G$ such that every edge of $G$ connects an ancestor-descendant pair in $F$, where we measure the depth of a tree as the maximum number of vertices on a root-leaf path~\cite{NesOss-12}. Equivalently, it is the size of a maximum clique in a trivially perfect supergraph of $G$ chosen to minimize this clique size; here, a trivially perfect graph is the graph of ancestor-descendant pairs in a forest.
\ifFull
Since the trivially perfect graphs are a special case of the chordal graphs, and the treewidth of a graph is (one less than) the maximum size of a clique in a chordal supergraph chosen to minimize this clique size, it follows that tree-depth is always at least one plus treewidth.
\fi
A graph $G$ with vertex cover number $k$ has tree-depth at most $k+1$, for we may find a tree $T$ of depth $k+1$ that has the $k$ vertices of the cover on a path, from which all other vertices descend as leaves; all edges of $G$ connect ancestor-descendant pairs in $G$. For this reason, in some sense the result of this section is stronger than that of Theorem~\ref{thm:cover-fpt}, although the dependence on the parameter is worse.

An $n$-vertex path has tree-depth $\lceil\log_2(n+1)\rceil$. It follows that an arbitrary depth-first search tree for a given graph $G$ has a depth that is at most $2^d-1$ (because otherwise it would contain a path that is too long for the given depth) and at least the tree-depth $d$ (because the DFS tree has the ancestor-descendant property from which tree-depth is defined). Based on this observation, one can derive an FPT algorithm for computing the tree-depth, by finding a DFS tree, using it to construct a tree decomposition, and applying standard dynamic programming techniques to this decomposition~\cite{NesOss-12}.

\begin{lemma}
Let $G$ be a graph with tree-depth at most $d$, as witnessed by a forest $F$ of depth $d$ for which all edges of $G$ connect ancestor-descendant pairs. Then
in linear time it is possible to replace $G$ by an equivalent kernel for 1-planarity consisting of a collection of disconnected subgraphs with $O(2^{2d^2+O(d)})$ vertices each.
\end{lemma}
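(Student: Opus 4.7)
The plan is to process the depth-$d$ forest $F$ bottom-up, classifying rooted subtrees of $F$ by a finite notion of type and capping the multiplicity of each type at every internal vertex. Since every edge of $G$ joins ancestor-descendant pairs in $F$, each connected component of $G$ is contained in a single tree of $F$, so I would treat each component (equivalently, each tree of $F$) separately; the kernel of a component of depth $\le d$ will be a smaller component of bounded size. Define the \emph{type} of a subtree $F_v$ rooted at $v$ (at depth $d-i$ in $F$) recursively as the isomorphism class of the rooted subtree together with, at every vertex $u$ of $F_v$, the subset of the at most $d-1$ ancestors of $u$ in $F$ that are joined to $u$ by an edge of $G$. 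A leaf has at most $2^{d-1}$ possible types; an internal vertex's type is determined by its own ancestor-label together with the multiset of children's types.

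The reduction step is the following. Visit vertices of $F$ in post-order. At a vertex $v$, group the children of $v$ by the type of their subtree, and for each type $\tau$ retain at most $N=N(d)$ children of that type, deleting the subtrees of all other children with type $\tau$. Call the resulting graph $G'$. One direction of the equivalence $G$ is 1-planar iff $G'$ is 1-planar is immediate: $G'$ is an induced subgraph of $G$. For the converse I would generalize the uncrossing argument of Lemma~\ref{lem:K2i-planar}. In a 1-planar drawing of $G'$ minimizing crossings, among the retained copies of a type $\tau$ anchored at the same set of at most $d$ ancestors there must exist one copy that is drawn planarly and whose bounding ancestor-edges are uncrossed, provided $N$ exceeds a polynomial-in-$d$ threshold. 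The threshold is forced by two density constraints that must hold in any 1-planar drawing: the Pach--Tóth bound $5n-10$ on edges in any drawing with at most two crossings per edge, which limits the number of internally vertex-disjoint ancestor-ancestor paths, and the non-1-planarity of $K_{3,7}$, which limits simultaneous attachments onto a triple of ancestors. The deleted copies of $\tau$ can then be re-inserted in a neighborhood of the uncrossed copy without introducing new crossings.

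The main obstacle is exactly this insertion argument in its general form: the $K_{2,i}$ case of Lemma~\ref{lem:K2i-planar} rests on having two free neighbors and a face to work inside, and extending it to arbitrary subtree types attached to up to $d-1$ ancestors requires arguing that after uncrossing there is always a face of the planarly-drawn anchor structure that is empty of the vertices lying outside the type $\tau$ gadgets, so that the extra copies fit inside it. This is where both the Pach--Tóth bound and the forbidden $K_{3,7}$ enter, and getting $N$ down to a polynomial in $d$ will require a careful counting argument over anchor tuples.

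Granting the reduction, the size bound follows from a level-by-level recurrence. Let $T_i$ denote the number of possible types of a reduced subtree of $F$-depth $i$, and let $S_i$ denote the maximum number of vertices in such a reduced subtree. Then $T_1 \le 2^{d-1}$ and $S_1=1$, while $T_{i+1}\le 2^{d-1}\cdot (N+1)^{T_i}$ and $S_{i+1}\le 1+N\cdot T_i\cdot S_i$, since each root has at most $T_i$ distinct child-types and at most $N$ children of each type. Iterating with $N$ polynomial in $d$ and $d$ levels, $S_d$ works out to $2^{O(d^2)}$, matching the claimed $O(2^{2d^2+O(d)})$ bound. The linear running time is achieved because the post-order traversal, type computation by canonical encoding of labeled rooted trees, and deletion of surplus children can each be implemented in time linear in $|V(G)|+|E(G)|$ for graphs with bounded tree-depth (after first rejecting graphs with more than $4n$ edges via the 1-planar edge bound).
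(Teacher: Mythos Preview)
Your recurrence does not give the bound you claim. With $T_1\le 2^{d-1}$ and $T_{i+1}\le 2^{d-1}(N+1)^{T_i}$, already $T_2$ is of order $(N+1)^{2^{d-1}}=2^{\Omega(2^{d})}$, and iterating $d$ times produces a tower of exponentials of height roughly $d$, not $2^{O(d^2)}$. Since $S_{i+1}\ge N\,T_i$, the same blow-up hits $S_d$. The trouble is that your ``type'' records the full labelled isomorphism class of a subtree, and the number of such classes at depth $i+1$ is exponential in the number at depth $i$. The paper sidesteps this entirely: it classifies a child subtree $T_i$ of a node $v$ only by the set $S_i\subseteq\{v\text{ and its ancestors}\}$ that $T_i$ actually touches. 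There are at most $2^d$ such sets at every level, independent of the subtree's internal structure, and for each set $S$ the number $|C(S)|$ of children with that attachment set is shown to be $O(2^d)$; hence every node has $O(2^{2d})$ children and the total size is $(2^{2d})^d=2^{2d^2+O(d)}$.

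The second gap is the one you yourself flag: the re-insertion step. The paper never attempts a general analogue of Lemma~\ref{lem:K2i-planar}. Instead it splits by $|S|$. For $|S|\ge 3$ it proves that if $|C(S)|$ exceeds $O(2^d)$ then $G$ is \emph{not} 1-planar (contracting each child's connecting tree to a point yields a $K_{3,3}$ per triple, giving $\Omega(|C(S)|^2)$ forced crossings on $O(|C(S)|\,2^d)$ edges), so one simply rejects rather than deletes and re-inserts. For $|S|=2$ it shows that when $|C(S)|>2^d$ no cycle of the drawing can separate the two anchors (any such cycle has length $<2^d$ yet must cross all $>2^d$ child subgraphs), so the two anchors can be joined by an uncrossable edge and the instance split into independent pieces, one per child. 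Neither branch needs the face-finding/re-insertion argument you are worried about; your Pach--T\'oth and $K_{3,7}$ ingredients are replaced by this crossing-count and short-cycle reasoning, which is where the $2^d$ (rather than $\mathrm{poly}(d)$) threshold and the ``collection of disconnected subgraphs'' in the statement come from.
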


\begin{proof}
If $G$ is not biconnected we may test 1-planarity on each biconnected component of $G$ separately; 
therefore, we can assume without loss of generality that the
given graph $G$ is 2-connected, and that we have a tree $T$ of depth $d$
such that every edge of $G$ connects an ancestor-descendant pair in $T$.
We can also assume without loss of generality that each node of $T$ is
adjacent to at least one node in each of its child subtrees (because
otherwise we could move those children up to be siblings of the node,
which does not increase the depth) and that each child subtree induces
a connected subgraph (because otherwise we could split it into two
separate children). Because the tree-depth is $d$, the longest 
path in $G$ has length less than $2^d$.

Now consider how many children a node $v$ in $T$ can have. For each child
subtree $T_i$, consider the set $S_i$ of $v$ and ancestors of $v$ that are
connected to nodes in $T_i$. And for each subset $S$ of $v$ and its ancestors,
let $C(S)$ be the set of child subtrees $T_i$ of $v$ for which $S_i=S$. There
are at most $2^d$ different sets $S$, and we want to show that for each of
them, $C(S)$ has bounded size.
If $|S|=1$, this is easy: then $S=\{v\}$ (because otherwise there is no node
in $T_i$ adjacent to $v$) and $v$ is an articulation point, violating the
assumption of 2-connectivity.

Next, consider the case that $|S| \ge 3$. That is, we have a set $S$
consisting of v and two or more of its ancestors, and a set $C(S)$ of
child subtrees of $v$ that are each connected to all of the nodes in $S$.
Choose exactly three nodes of $S$ and, for each child subtree $T_i$ in
$C(S)$,  let $X_i$ be a smallest subgraph connecting the three chosen nodes
in the subgraph of $G$ induced by $T_i\cup S$. By the bound on the length of
paths in $G$, $|X_i|=O(2^d)$. Note that, among any three of
these trees $X_i$, $X_j$, and $X_k$ (all for members of $C(S)$) there must be
at least one crossing, because contracting each tree to a single node
produces a $K_{3,3}$ subgraph. There are $\Omega(|C(S)|^3)$ triples of trees,
at least one crossing per triple, and at most $|C(S)|$ triples that
involve any single crossing, so there are $\Omega(|C(S)|^2)$ crossings
altogether, among a set of only $O(|C(S)| 2^d)$ edges. In order to
prevent the pigeonhole principle from forcing some edge to be crossed
twice, we must have $|C(S)|=O(2^d)$.

Finally, consider the case that $|S|=2$. In this case, $|C(S)|$ can be
unbounded (e.g. consider the graph $K_{2,n}$, which has tree-depth
three). But, if it is greater than $2^d$, then it does not matter how
much greater it is: no cycle in the drawing can separate the two
vertices in $S$, because the minimal such cycle would have to have
length at most $2^d$ but would have to cross each of the subgraphs $T_i$,
a contradiction. So in this case we can split the graph into subgraphs
formed from each child $T_i$ together with an uncrossable edge between
the two nodes in $S$, and test 1-planarity separately for each of these
subgraphs. When $C(S)$ is small enough that no such split is possible,
$|C(S)|=O(2^d)$.

After performing any splits from the $|S|=2$ case, the remaining graph
has its nodes arranged into a tree of height $d$ in which each node has
$O(2^{2d})$ children. Therefore, the total number of nodes in the tree
is $O(2^{2d^2+O(d)})$.
\end{proof}

By combining this kernelization with the known FPT algorithm for computing tree-depth and with Lemma~\ref{lem:exact-algorithm} for testing the 1-planarity of the kernel, we obtain

\begin{theorem}
\label{thm:tree-depth}
The 1-planarity of a given graph, with tree-depth $d$, may be computed in time
$O(n2^{2^{2d^2+O(d)}})$.
\end{theorem}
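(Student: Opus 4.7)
The plan is a three-step pipeline that just chains together the machinery already in place. First I would invoke the FPT algorithm for tree-depth described in the preceding discussion (DFS to depth at most $2^d-1$, then dynamic programming on the resulting tree decomposition, as attributed to Nešetřil and Ossona de Mendez) to produce in time $h(d)\cdot n$ a depth-$d$ witness forest $F$ for some computable function $h$. Second, I would feed $(G,F)$ into the kernelization lemma just established, which in linear time replaces $G$ by an equivalent collection of pairwise disjoint subgraphs, each on at most $N = 2^{2d^2+O(d)}$ vertices, with $G$ being 1-planar if and only if every subgraph in the collection is.

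Third, I would run the exact algorithm of Lemma~\ref{lem:exact-algorithm} on each piece separately. A piece on $N_i \le N$ vertices costs $2^{O(N_i)} \le 2^{O(N)} = 2^{2^{2d^2+O(d)}}$. Since the pieces are pairwise disjoint, their sizes sum to $O(n)$ and hence there are at most $n$ of them; summing the per-component cost over the collection yields the claimed bound $O(n\cdot 2^{2^{2d^2+O(d)}})$. The linear preprocessing cost from the tree-depth computation and from the kernelization is trivially absorbed into this double-exponential factor, since $h(d)$ is dominated by $2^{2^{2d^2+O(d)}}$.

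The main point worth checking, rather than a genuine obstacle, is that 1-planarity actually decomposes over the collection of components produced by the kernelization. This decomposition is already built in: the reduction to 2-connected pieces at the outset of the kernelization lemma handles articulation points, and the $|S|=2$ case splits off a child subtree by inserting an uncrossable edge between the two ancestors in $S$, which is the device that makes 1-planarity of the original graph equivalent to 1-planarity of each piece separately. Once this is in hand, the theorem follows by simply composing the three steps in the order above and collecting the running times.
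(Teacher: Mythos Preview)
Your proposal is correct and follows exactly the paper's own route: compute a depth-$d$ witness via the known FPT tree-depth algorithm, apply the preceding kernelization lemma to break $G$ into disjoint pieces of size $2^{2d^2+O(d)}$, and run Lemma~\ref{lem:exact-algorithm} on each piece. Your additional justification for the factor $n$ (bounding the number of pieces by the total kernel size) and your remark that 1-planarity decomposes over the pieces are accurate and simply spell out what the paper leaves implicit.
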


\ifFull
Because the kernel may contain multiple connected components, and we are bounding the component size but not the number of components, the dependence of the time bound for $d$ is multiplied by $n$ rather than (as in Theorem~\ref{thm:cover-fpt}) added to it. Alternatively, it would be possible to remove isomorphic components from the kernel, and get a bound of the form $O(n+f(d))$, but with a larger dependence on~$d$.
\fi

As an example of the power of this approach, we show how to use it to recognize 1-planar cographs. Cographs are well-quasi-ordered by induced subgraphs~\cite{Dam-JGT-90}, from which it follows that there is an algorithm for testing 1-planarity by checking for the existence of a finite set of forbidden induced subgraphs; however, we do not know how to explicitly list these forbidden subgraphs nor do we know how to turn a recognition algorithm along such lines into an algorithm for finding a 1-planar drawing. In contrast, the algorithm outlined below for recognizing 1-planar cographs is explicit (albeit with impractically large constants) and constructs a drawing of the graph.

\begin{figure}[t]
\ifFull
\centering\includegraphics[height=1.75in]{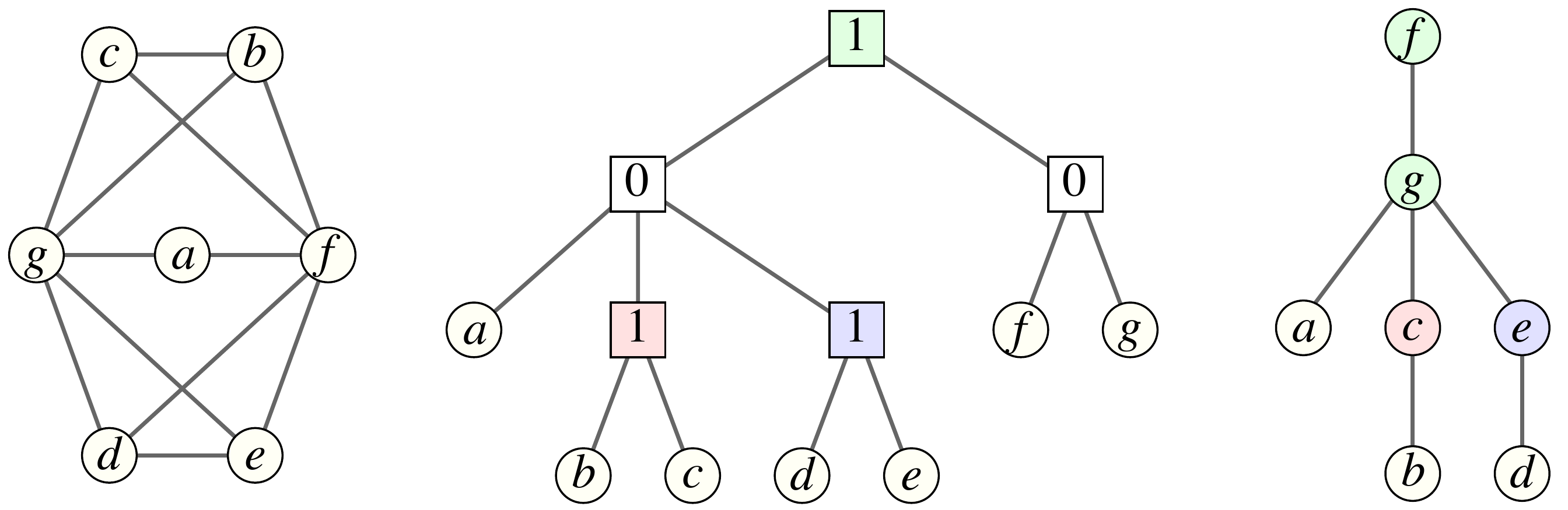}
\else
\centering\includegraphics[height=1in]{cograph-treedecomp}
\fi
\caption{Finding a low-tree-depth representation of a cograph by forming a path for each 1-labeled cotree node, consisting of the cotree leaves that descend from it but are not in its heaviest child. Left: a cograph. Center: its cotree. Right: the tree formed by connecting together the paths $L_x$. Each cotree node has the same color as its corresponding path.}
\label{fig:cograph-treedecomp}
\end{figure}

\begin{lemma}
\label{lem:cograph}
Let $\mathcal{C}_{a,b}$ denote the class of cographs that do not contain $K_a$ nor $K_{b,b}$ as subgraphs. Then the graphs in $\mathcal{C}_{a,b}$ have tree-depth at most $1+(a-1)(b-1)$.
\end{lemma}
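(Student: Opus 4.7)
The plan is to induct on the cotree structure of $G$, building a depth-decomposition forest $F$ whose root-to-leaf paths mirror the join/disjoint-union decomposition. Recall that every cograph is assembled from single vertices by disjoint union (0-labeled cotree nodes) and join (1-labeled cotree nodes), and we may assume the cotree is reduced so that each internal node has at least two children.

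The main structural observations concern a 1-labeled cotree node $x$ with children $y_1,\ldots,y_k$. Picking one vertex per $V(y_i)$ yields a $K_k$ in $G$, since any two such vertices lie in distinct join-children of $x$, so $k\le a-1$ by the no-$K_a$ hypothesis. For any two children the join contributes a $K_{|V(y_i)|,|V(y_j)|}$, so no-$K_{b,b}$ allows at most one ``big'' child $y_1$ with $|V(y_1)|\ge b$, and if such a child exists the same hypothesis applied to the biclique between $V(y_1)$ and $\bigcup_{j\ge 2} V(y_j)$ forces $\sum_{j\ge 2}|V(y_j)|\le b-1$. The forest $F$ is then built recursively: at a disjoint-union root, take the disjoint union of component forests; at a join-root with every child small ($\le b-1$ vertices), place all $\le (a-1)(b-1)$ vertices of $V(G)$ on one path; at a join-root with a big child $y_1$, arrange the at most $b-1$ remaining vertices into a path $L_x$ and hang the recursively built forest for $G[V(y_1)]$ below the last vertex of $L_x$. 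Every edge of $G$ becomes an ancestor--descendant pair in $F$: edges inside $V(y_1)$ by induction, edges among $L_x$ because $L_x$ is a path, and edges between $L_x$ and $V(y_1)$ because $x$ is a join while $L_x$ lies above the forest for $V(y_1)$.

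The depth bound follows by induction on $|V(G)|$. The all-small case gives depth at most $(a-1)(b-1)\le 1+(a-1)(b-1)$. In the big-child case, $G[V(y_1)]\in\mathcal{C}_{a-1,b}$, because any $K_{a-1}$ inside $V(y_1)$ combined with any vertex of $V(y_2)$ would yield a $K_a$ in $G$; induction then gives depth at most $1+(a-2)(b-1)$ for the subforest rooted at $V(y_1)$, and the total depth is $(b-1)+1+(a-2)(b-1)=1+(a-1)(b-1)$. The one delicate point is the dual use of no-$K_{b,b}$ to both isolate a unique big child and cap $|L_x|$ at $b-1$; once these are in hand, the depth estimate essentially writes itself.
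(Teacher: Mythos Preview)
Your proof is correct and shares the paper's core idea: at each $1$-labeled cotree node, set aside the vertices outside one distinguished child and descend into that child. The paper does this globally and non-recursively---for every $1$-labeled node $x$ it defines $L_x$ to be the leaves outside the \emph{heaviest} child, strings the $L_x$'s into paths, and bounds the depth by counting how many $1$-labeled nodes can lie on a root-to-leaf cotree path (at most $a-2$, else a $K_a$ appears) and how large each $L_x$ can be (at most $2(b-1)$, with at most one exceeding $b-1$). Your presentation is recursive: you peel off the small children (total at most $b-1$ when a big child exists), observe that the big child's induced subgraph lies in $\mathcal{C}_{a-1,b}$, and let the induction do the counting. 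The decrement $a\mapsto a-1$ is a cleaner bookkeeping device than the paper's path-length argument, and your bound $|L_x|\le b-1$ in the big-child case is sharper than the paper's $2(b-1)$, though both yield the same final estimate. One small point worth stating explicitly: your induction is on $|V(G)|$ but invokes the hypothesis at a different value of $a$, so the statement being proved should be quantified over all $a,b$ simultaneously.
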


\begin{proof}
For any graph $G$ in this class, we use a cotree representing $G$, and use it to guide the construction of a forest $F$ on the nodes of $G$. A cotree has the vertices of $G$ as its leaves; every internal node is labeled either $0$ or $1$, and two vertices of $G$ are adjacent if and only if their lowest common ancestor is labeled $1$. We assume that this tree is in canonical form meaning that no two adjacent internal nodes have the same label as each other and that each internal node has at least two children.

For each node $x$ labeled 1 in the cotree, let $H_x$ denote the subtree descending from a child of $x$ that contains the largest number of leaves (breaking ties arbitrarily) and let $L_x$ denote the set of leaf descendants of $x$ that are not in $H_x$. For each maximal set $L_x$ (not contained in $L_y$ for some 1-labeled node $y$), we form a path, which will form a subgraph of $F$. If the closest 1-labeled ancestor of cotree node $x$ is node $y$, we set the parent of the top node of path $L_x$ to be the bottom node of path $L_y$. In addition, if any vertex $v$ of $G$ does not belong to a set $L_x$, we make it a leaf of the forest $F$, and we set the parent of $v$ to be the bottom node of the path for the lowest 1-labeled ancestor of $v$ in the cotree. The forest constructed in this way (shown in Figure~\ref{fig:cograph-treedecomp}) will necessarily have the defining property of tree-depth that every edge in $G$ connects an ancestor-descendant pair in $F$.

If $L_x$ has at least $2b-1$ leaves, then the leaf descendants of $x$ contain a $K_{b,b}$ subgraph. For this reason, every path $L_x$ has at most $2(b-1)$ vertices of $G$ in it. Additionally, on any path from the root to a leaf in the cotree, at most one of the 1-labeled cotree nodes can have more than $b-1$ nodes in $L_x$, for if one such node does, then each of its ancestors must have at most $b-1$ nodes in $L_x$, or else we would again have a $K_{b,b}$ subtree. Finally, observe that a path from the root to a leaf in the cotree that has $a-1$ 1-labeled nodes would give rise to a $K_a$ subgraph; therefore, every such path has at most $a-2$ 1-labeled nodes.
By this analysis, the longest path from leaf to root that could exist in the forest $F$ consists of one vertex of $G$ that does not belong to a set $L_x$, one set $L_x$ of size $2(b-1)$, and $a-3$ sets $L_x$ of size $b-1$, matching the depth given in the statement of the lemma.
\end{proof}

\begin{corollary}
We can recognize 1-planar cographs, and find 1-planar drawings of them, in $O(n)$ time.
\end{corollary}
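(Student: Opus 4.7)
The plan is to combine the cotree representation of cographs with Theorem~\ref{thm:tree-depth} via Lemma~\ref{lem:cograph}. First I would compute the cotree of the input graph in linear time using a standard cograph recognition algorithm (Corneil--Perl--Stewart); if the input is not a cograph, the algorithm reports so. The critical observation is that every 1-planar graph avoids $K_7$ as a subgraph (since $K_7$ has $21$ edges, exceeding the $4n-8$ edge-density bound for 1-planar graphs on $n=7$ vertices) and avoids $K_{5,5}$ as a subgraph (since $K_{5,5}$ is not on the list in Lemma~\ref{lem:bipartite}). Hence every 1-planar cograph lies in $\mathcal{C}_{7,5}$, which by Lemma~\ref{lem:cograph} has tree-depth at most $1+6\cdot 4 = 25$.

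From the cotree, I would run the construction described in the proof of Lemma~\ref{lem:cograph} to produce a forest $F$ on the vertices of $G$ in $O(n)$ time. If the depth of $F$ exceeds $25$, then by the contrapositive of Lemma~\ref{lem:cograph} the cograph contains $K_7$ or $K_{5,5}$ as a subgraph and we output ``not 1-planar''. Otherwise $F$ witnesses tree-depth at most $25$, so we invoke the algorithm of Theorem~\ref{thm:tree-depth} with $d=25$. Because $d$ is now a fixed constant, the doubly-exponential factor in the running time bound of Theorem~\ref{thm:tree-depth} collapses into an (astronomical but fixed) hidden constant, giving an overall linear-time algorithm.

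The main obstacle is ensuring that we actually output a drawing rather than merely a decision. This requires following the drawing back through the chain of reductions used in Theorem~\ref{thm:tree-depth}: the splits at sets $S$ with $|S|=2$ decompose the graph into pieces joined by an uncrossable edge between the two split vertices, so drawings of the pieces can be reassembled by nesting them on opposite sides of that edge; the exact algorithm of Lemma~\ref{lem:exact-algorithm} is divide-and-conquer and returns an explicit drawing at the base case. Each of these steps preserves and reconstructs a drawing in time linear in the subgraph size, so the final output is a 1-planar drawing of $G$ whenever one exists, within the overall $O(n)$ time bound.
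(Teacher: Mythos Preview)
Your proposal is correct and follows essentially the same route as the paper's own proof: rule out $K_7$ and $K_{5,5}$, then invoke Lemma~\ref{lem:cograph} to bound the tree-depth by the constant $25$, and finally apply Theorem~\ref{thm:tree-depth}. The only cosmetic difference is that the paper tests directly for $K_7$ and $K_{5,5}$ as subgraphs, whereas you detect their presence indirectly by checking whether the forest $F$ built in the proof of Lemma~\ref{lem:cograph} exceeds depth $25$; both are valid and linear-time, and your extra paragraph on reconstructing the drawing simply spells out what the paper leaves implicit.
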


\begin{proof}
We first test whether the given cograph contains $K_7$ or $K_{5,5}$ as a subgraph. If it does, it is not 1-planar. If it does not, we may apply Lemma~\ref{lem:cograph} and Theorem~\ref{thm:tree-depth}.
\end{proof}

\section{Cyclomatic number}

We say that a graph $G$ has \emph{cyclomatic number} $k$ if $k$ is the smallest number of edges that must be removed from $G$ to yield a forest; equivalently $k=m - n + c$, where $c$ is the number of connected components in $G$. By a \emph{maximal degree two path} we shall mean a path between two vertices each of degree greater than two such that all vertices in the interior of the path have degree two. For technical reasons, an edge between vertices each having degree greater than two will also be considered a maximal degree two path.
Gurevich et al{.} define a \emph{$k$-almost-tree} to be a graph $G$ such that given a spanning tree $T$ of $G$ every biconnected component of $G$ has at most $k$ edges not in $T$\cite{GurStoVis-JACM-84}.
This is equivalent to each biconnected component having cyclomatic number $k$.

The cyclomatic number and $k$-almost-tree parameter have previously been used as parameters in fixed parameter algorithms. For example, in biology,  gene expression can be represented as a Boolean network in which individual genes are represented as vertices and edges represent correlations between pairs of genes.  Fixed parameter tractable algorithms have been designed for the control problem, which involves finding sequences of valid labelings of genes as being active or inactive~\cite{AkuHayChi-JTB-07}. In operations research, fixed-parameter algorithms for the continuous facility location problem have been constructed, where weighted edges represent a road network on which to efficiently place facilities serving clients in the network~\cite{GurStoVis-JACM-84}.  Intraprogram communication networks in distributed systems use vertices to represent modules of a program to be computed in parallel and edges to represent communicating pairs of modules; they also have structure yielding fixed-parameter algorithms~\cite{Fer-TSE-89} with respect to this parameter.

\begin{lemma}\label{lem:vertex-count}
If $G$ is a graph with cyclomatic number $k$ and no degree one vertices, then $G$ has at most $2k - 2$ vertices of degree greater than two. Furthermore, this bound is tight. Also, the number of maximal degree two paths is at most $3k-3$.
\end{lemma}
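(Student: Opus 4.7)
The plan is to derive both bounds from elementary degree counting. Write $n_2$ and $n_{\ge 3}$ for the numbers of vertices of degree two and of degree at least three, respectively. Since the minimum degree is at least two, handshaking gives
\[
2m \;=\; \sum_{v} \deg(v) \;\ge\; 2n_2 + 3n_{\ge 3} \;=\; 2n + n_{\ge 3},
\]
so $n_{\ge 3} \le 2(m-n) = 2(k-c)$ via the identity $k = m - n + c$. Since $G$ is non-empty, $c \ge 1$, yielding the stated bound $n_{\ge 3} \le 2k-2$.

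For tightness I would exhibit, for each $k \ge 1$, a graph achieving equality. A single cycle suffices for $k = 1$ (both sides equal $0$). For $k \ge 2$, any connected cubic (multi)graph on $2k-2$ vertices works; a concrete instance is $C_{2k-2}$ augmented with $k-1$ chords between antipodal vertices, which has $3k-3$ edges and cyclomatic number $(3k-3)-(2k-2)+1 = k$.

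For the bound on maximal degree-two paths, I plan to contract each such path down to a single edge, producing a multigraph $G'$ whose vertices are exactly the degree-$\ge 3$ vertices of $G$ (each retaining its original degree) and whose edges are in bijection with the maximal degree-two paths of $G$. Handshaking in $G'$ gives $2m' \ge 3n'$, and suppressing a degree-two vertex preserves both cycle rank and component count, so $k = m' - n' + c'$. Combining $m' = k + n' - c'$ with the already-derived $n' \le 2(k-c')$ yields $m' \le 3(k-c') \le 3k-3$.

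The one bit of bookkeeping requiring care is the possibility that a connected component of $G$ is itself a cycle of degree-two vertices; such a component contributes no maximal degree-two path, contributes exactly $1$ to each of $k$ and $c$, and disappears entirely under contraction. Removing such components at the outset leaves $k-c$ unchanged and does not affect the quantities being bounded, so the analysis above goes through unchanged. I do not expect any serious obstacle beyond this minor edge case.
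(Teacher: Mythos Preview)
Your proof is correct and follows essentially the same route as the paper: handshaking for the $2k-2$ bound, and contracting each maximal degree-two path to a single edge for the $3k-3$ bound. If anything, your treatment is slightly more careful---you track the component count $c$ explicitly and dispose of the all-degree-two cycle components, a boundary case the paper glosses over.
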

\begin{proof}
Double counting edges yields $2(n - c + k) \geq 2a + 3b$, where $a$ is number of degree two vertices and $b$ is the number of vertices of degree greater than two. Using $n=a+b$ and $c\ge 1$ we obtain $b \leq 2k - 2$, establishing the upper bound. For the upper bound consider any biconnencted cubic graph with $2k-2$ vertices, e.g., a cubic Halin graph whose characteristic tree has $k$ leaves.

For the bound on the maximal degree two paths consider the graph $G'$ where each maximal degree two path is reduced to a single edge. The graph $G'$ has cyclomatic number $k$ and at most $2k - 2$ vertices. This implies that $G'$ has at most $3k - 3$ edges, establishing the bound.
\end{proof}

\begin{lemma}\label{lem:uncrossing}
If $G$ is a 1-planar, then there is a 1-planar drawing of $G$ such that maximal degree two paths do not self intersect.
\end{lemma}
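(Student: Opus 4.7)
The plan is to start from a 1-planar drawing of $G$ that minimizes the total number of crossings over all 1-planar drawings of $G$, and argue that in this drawing no maximal degree-two path can self-intersect; the minimizer will then be the desired drawing. Suppose for contradiction that some maximal degree-two path $P = v_0, v_1, \ldots, v_k$ has edges $e = v_{i-1}v_i$ and $f = v_{j-1}v_j$, with $i \le j-1$, crossing at a point $p$. By 1-planarity, each of $e, f$ has exactly this one crossing, so in particular no third edge of $G$ passes through any neighborhood of $p$.

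In the incident case $j = i+1$, I would perform a standard local swap at $p$: the two arcs of $e$ and the two arcs of $f$ meet at $p$ in an X, and I reconnect them so that the $v_{i-1}$-arc of $e$ is joined to the $v_i$-arc of $f$ (giving a new curve for $e$), and the $v_i$-arc of $e$ is joined to the $v_{i+1}$-arc of $f$ (giving a new curve for $f$). These two new curves have the correct endpoints in $G$, no crossing with each other after a small perturbation near $p$, and no other crossings inherited from $e$ or $f$, so the total number of crossings strictly decreases, contradicting minimality.

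In the non-incident case $j > i+1$, a naive swap at $p$ would produce curves between the wrong endpoint pairs. Here I would exploit the fact that every interior vertex $v_i, v_{i+1}, \ldots, v_{j-1}$ of $P$ has degree two in $G$ and can therefore be relocated without affecting any edge outside $P$. The idea is to move $v_i, \ldots, v_{j-1}$ into the former positions of $v_{j-1}, \ldots, v_i$ respectively (reversing the sub-path), reuse each former middle-edge curve with its endpoints relabeled, draw the new $v_{i-1}v_i$ edge as the concatenation of the $v_{i-1}$-arc of the old $e$-curve with the $v_{j-1}$-arc of the old $f$-curve, and symmetrically draw the new $v_{j-1}v_j$ edge from the $v_i$-arc of $e$ and the $v_j$-arc of $f$. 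A small perturbation at $p$ removes the mutual crossing, and because the union of drawn $P$-curves occupies essentially the same set of points as before, no new crossings with edges outside $P$ are created. Once again the total number of crossings drops by one, contradicting the choice of drawing.

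The main obstacle is the non-incident case: one has to check simultaneously that the relocated vertices carry the path in the right cyclic order, that the new boundary curves have the correct endpoints in $G$, that the reused middle curves preserve all their old crossings unchanged, and that the perturbation near $p$ does not introduce new crossings. All of these are secured by the 1-planarity hypothesis, which forces $e$ and $f$ to be otherwise uncrossed, so near $p$ the only curves that need to be managed are the four arcs of $e$ and $f$ themselves.
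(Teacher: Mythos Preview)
Your argument is correct and follows essentially the same idea as the paper: remove a self-crossing of a maximal degree-two path by local surgery at the crossing point, which strictly decreases the number of crossings (so a crossing-minimal 1-planar drawing has no such self-crossings). The paper states this tersely, appealing to a figure for the local uncrossing, while you spell out the two cases and in particular make explicit the vertex relabeling/relocation of the interior degree-two vertices that is needed in the non-incident case; this is precisely the detail the paper's phrase ``locally uncross \ldots within a circular region'' leaves implicit.
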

\begin{proof}
It suffices to show that a self crossing in a maximal degree two path can be removed without increasing the number of crossings on any edges.  We can locally uncross a self intersection changing the drawing within  a circular region $\mathcal{R}$ around the intersection that is not crossed by other edges.  See Figure~\ref{fig:almost-tree-path-crossing} for an example of this operation. 
\end{proof}
\begin{figure}[t]
\ifFull
\centering\includegraphics[width=0.8\textwidth]{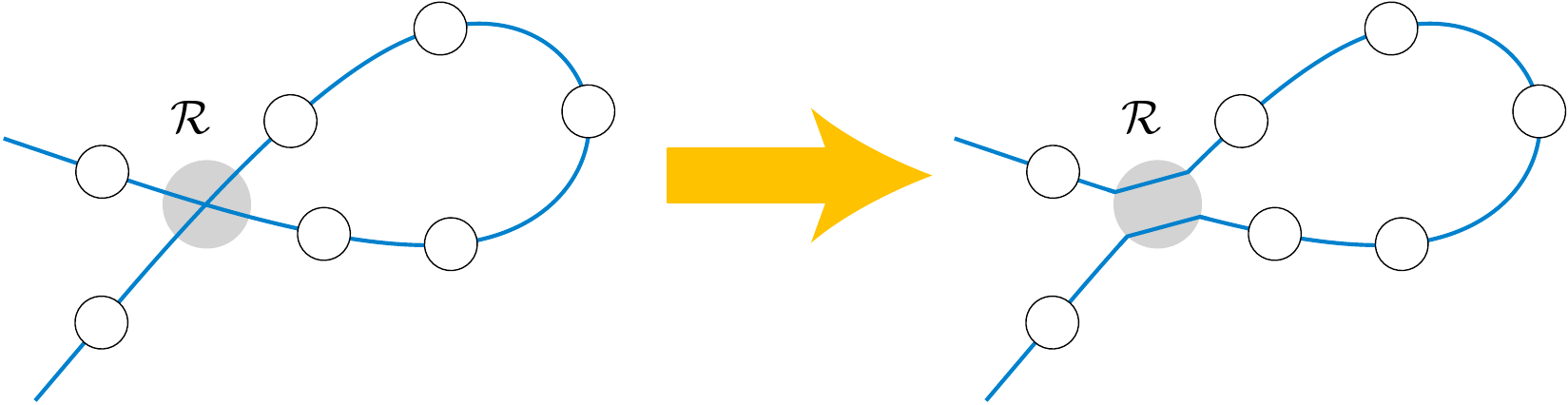}
\else
\centering\includegraphics[width=0.6\textwidth]{almost-tree-path-crossing}
\fi
\caption{Removing a crossing in a degree two path}
\label{fig:almost-tree-path-crossing}
\end{figure}

\begin{lemma}\label{lem:sequence-reduction}
Every word on $n>1$ symbols, without consecutive equal symbols, of length greater than $2n! - 1$ has a subword on $k>1$ symbols, for some $k\le n$, such that each symbol appears at least $k$ times in the subword. Furthermore, this bound is tight, i.e., there exists a word $w$ of length $2n!-1$ on $n$ symbols such that for every $1<k\le n$, $w$ has no subword on $k$ symbols in which each symbol appears at least $k$ times.
\end{lemma}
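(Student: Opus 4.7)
The plan is to handle both halves by induction on $n$, sharing the base case $n = 2$.

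For the existence direction I will induct on $n$ with the slightly cleaner statement: any word of length at least $2n!$ on at most $n$ symbols and without two equal consecutive letters contains a factor that uses $k$ distinct symbols each occurring at least $k$ times, for some $2 \le k \le n$. When $n = 2$, any such word is a length-$\ge 4$ alternation of two letters, whose first four characters already form the desired factor with $k = 2$. For the induction step on a word $w$, if $w$ actually uses fewer than $n$ symbols then the bound $2n! \ge 2(n-1)!$ lets me apply the hypothesis directly. Otherwise $w$ uses all $n$ symbols; if every symbol occurs at least $n$ times then $w$ itself is the desired factor with $k = n$, so assume some symbol $s$ occurs $t \le n-1$ times. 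Splitting $w$ at the occurrences of $s$ yields $t + 1 \le n$ pieces on at most $n-1$ symbols each, with total length at least $2n! - (n-1)$; the longest piece therefore has length at least $(2n! - (n-1))/n = 2(n-1)! - (n-1)/n$, which, being an integer, is at least $2(n-1)!$, so the induction hypothesis produces the required factor inside that piece.

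For tightness I will exhibit a recursive family of witnesses. Set $w_2 = 121$ and $w_n = w_{n-1} \cdot n \cdot w_{n-1} \cdot n \cdots n \cdot w_{n-1}$, with $n$ copies of $w_{n-1}$ separated by $n-1$ fresh instances of the new symbol $n$. A direct computation gives $|w_n| = n |w_{n-1}| + (n-1) = 2n! - 1$; no two consecutive letters are equal because $n$ is fresh; symbol $n$ occurs exactly $n-1$ times in $w_n$; and an easy induction shows that every symbol in $\{1, \dots, n-1\}$ appears in every embedded copy of $w_{n-1}$.

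To verify that $w_n$ has no ``bad'' factor I use a second induction. Suppose for contradiction that $u$ is a factor of $w_n$ using $k \ge 2$ distinct symbols each occurring at least $k$ times. If symbol $n$ is absent from $u$, then $u$ lies within a single copy of $w_{n-1}$ and the induction hypothesis applied to $w_{n-1}$ gives a contradiction. Otherwise $n$ appears $j \ge k \ge 2$ times in $u$; because consecutive occurrences of $n$ in $w_n$ bracket exactly one complete copy of $w_{n-1}$, the factor $u$ must contain at least $j - 1 \ge 1$ complete copies of $w_{n-1}$, each of which uses every symbol in $\{1, \dots, n-1\}$. Together with the occurrences of $n$, this means $u$ uses all $n$ symbols, so $k = n$ and symbol $n$ must occur at least $n$ times in $u$; but the whole of $w_n$ contains only $n-1$ copies of the symbol $n$, a contradiction.

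The one non-routine step is the choice of the extremal construction. Obvious doublings like the Zimin-style words $Z_n = Z_{n-1} \cdot n \cdot Z_{n-1}$ produce words of length $2^n - 1$, nowhere near $2n! - 1$; matching the factorial bound requires the asymmetric decomposition into $n$ copies of $w_{n-1}$ around $n-1$ new separators, which lets the counts of the old symbols grow like $n!$ while pinning the count of the freshest symbol at $n - 1$, neatly blocking the only way in which the whole word could become a bad factor.
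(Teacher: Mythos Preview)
Your proof is correct and follows essentially the same approach as the paper. The existence half uses the identical induction (pick a rarest symbol, split on it, pigeonhole into one of at most $n$ pieces of length at least $2(n-1)!$), and your extremal word $w_n = w_{n-1}\,n\,w_{n-1}\,n\cdots n\,w_{n-1}$ is exactly the paper's $w_k = (w_{k-1}\sigma_k)^{k-1}w_{k-1}$ in different notation. In fact you go a bit further than the paper: it only checks that $|w_k| = 2k!-1$ and leaves to the reader the verification that $w_k$ contains no bad factor, whereas you supply that argument explicitly (and correctly, via the observation that any factor containing $j\ge 2$ occurrences of the symbol $n$ must enclose $j-1$ full copies of $w_{n-1}$ and hence all $n$ symbols). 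One tiny wording quibble: in the pigeonhole step the quantity $2(n-1)! - (n-1)/n$ is not itself an integer; what you mean is that the longest piece, being an integer at least this value, must be at least $2(n-1)!$ because the value exceeds $2(n-1)! - 1$.
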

\begin{proof}
Let $w$ be a word on $n$ symbols of length at least $2(n!) - 1$, and let $\sigma$ be the symbol appearing least often in $w$. If $\sigma$ occurs more than $n$ times in $w$, then we are done. So assume that $\sigma$ occurs at most $n - 1$ times. Removing $\sigma$ from $w$ leaves us with at least $2(n!) - n$ symbols split into at most $n$ subwords. Thus, the longest of these subwords, call it $u$, has length at least $(2(n!) - n) / n = 2(n-1)!  - 1$. Since $u$ contains at most $n-1$ unique symbols we are done by induction on $n$.

To construct a word on $n$ symbols of length $2(n!) - 1$ with no reducible subword, let $\sigma_0, \sigma_1, \ldots, \sigma_n$ be our $n$ symbols. Now recursively define the words by $w_k = (w_{k-1}\sigma_k)^{k-1}w_{k-1}$ and $w_2 = \sigma_0\sigma_1\sigma_0$. A simple induction argument shows that the length of $w_k$ is $2(k!) - 1$.
\end{proof}

\begin{figure}[t]
\centering\includegraphics[width=0.95\textwidth]{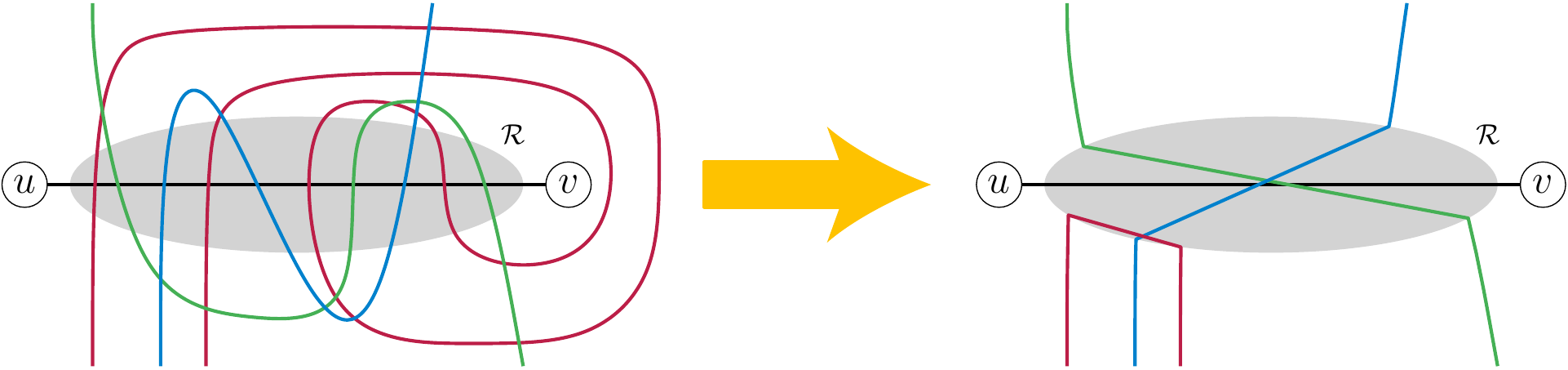}
\caption{Left crossing sequence \texttt{rgbrbrgbrg}; Right crossing sequence \texttt{bg}}
\label{fig:almost-tree-crossing-sequence}
\end{figure}

\begin{lemma}\label{lem:crossing-bound}
If $G$ is a 1-planar graph with $p$ maximal degree two paths, then $G$ has a 1-planar drawing such that every maximal degree two path is crossed at most $2p! - 1$ times.
\end{lemma}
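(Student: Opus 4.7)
The plan is to take a 1-planar drawing of $G$ obtained by applying Lemma~\ref{lem:uncrossing} to every maximal degree two path (so that none of them self-intersects), and then modify it so that every maximal degree two path is crossed at most $2p!-1$ times. For a fixed maximal degree two path $P$, walk along $P$ and record the crossings in order as a word $w_P$ whose letters label the other maximal degree two paths to which the crossing edges belong; since there are $p-1$ other such paths, the alphabet of $w_P$ has size at most $p-1 \le p$, matching the parameter $n$ of Lemma~\ref{lem:sequence-reduction}.

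The first step is to arrange that $w_P$ has no two consecutive equal letters. Suppose two consecutive crossings on $P$, at points $x$ and $y$, are both with edges of the same path $Q$. Then $P[x,y]$ is internally uncrossed, and together with the arc $Q[x,y]$ it forms a simple closed curve bounding a disk $D$ (since $Q$ also does not self-intersect). A topological swap of the two boundary arcs of $D$---redrawing $P[x,y]$ along a parallel copy of $Q[x,y]$ and simultaneously $Q[x,y]$ along a parallel copy of $P[x,y]$---eliminates the two crossings at $x$ and $y$, transfers each crossing of old $Q[x,y]$ with an external edge onto the new $P[x,y]$ at exactly the same location (and there are no internal $P[x,y]$-crossings to transfer in the other direction), preserves 1-planarity, and preserves the no-self-intersection property of every path. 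Performing this swap whenever possible strictly decreases $|w_P|$.

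Assume henceforth that every $w_P$ has no consecutive equals, and suppose for contradiction that $|w_P| > 2p!-1$ for some path $P$. Lemma~\ref{lem:sequence-reduction} yields a consecutive subword $s$ of $w_P$ using some $k>1$ symbols in which each of those symbols appears at least $k$ times. In the drawing, this corresponds to a contiguous stretch of $P$ crossed only by $k$ other paths $Q_1,\dots,Q_k$, each crossing $P$ at least $k$ times inside the stretch. I would then argue that among the $Q_1$-crossings inside $s$ one can always find a pair $x_1,x_2$ of consecutive $Q_1$-crossings whose enclosing disk---formed by $P[x_1,x_2]$ and $Q_1[x_1,x_2]$---has fewer internal crossings on the $P$ arc than on the $Q_1$ arc, so that the same kind of boundary swap as in the previous paragraph strictly reduces $|w_P|$ while preserving 1-planarity. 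Since $|w_P|$ is a nonnegative integer, the process terminates with $|w_P| \le 2p!-1$ for every $P$.

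The hard part will be making the nested swap in the reducible-subword case precise. Whereas the earlier swap concerns a disk whose $P$-side boundary has no internal crossings, here $P[x_1,x_2]$ contains interior crossings with other $Q_j$, and those crossings are transferred onto the new arc of $Q_1$ by the swap; the delicate point is to use the pigeonhole density supplied by the reducible subword---\emph{each} of the $k$ symbols appearing $\ge k$ times in $s$---to select $(x_1,x_2)$ so that this transfer strictly shortens $w_P$, does not push $w_P$ back into having consecutive equals, and does not violate 1-planarity on any edge. I expect to resolve this by picking $x_1, x_2$ as the innermost $Q_1$-pair in a nested hierarchy of candidate disks inside $s$, with an induction on $k$ handling the nesting.
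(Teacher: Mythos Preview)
Your route diverges from the paper's and carries real gaps. In the consecutive-equals step the swap is mis-specified: redrawing $P[x,y]$ along $Q[x,y]$ transfers the $m$ external crossings of $Q[x,y]$ onto $P$, so $|w_P|$ changes by $-2+m$ and need not drop; and if the subpath $P[x,y]$ has fewer than $m$ edges, some edge of the new $P[x,y]$ is forced to carry two crossings, so 1-planarity is lost. What does work here is a \emph{one-sided} reroute: push $Q[x,y]$ onto a close parallel of the uncrossed arc $P[x,y]$. That removes the two $P$--$Q$ crossings (and, incidentally, all $m$ crossings on old $Q[x,y]$) while adding none, so $|w_P|$ drops by exactly~$2$ and every edge keeps at most one crossing.

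The harder gap is the reducible-subword step. There $P[x_1,x_2]$ is genuinely crossed, so the one-sided trick fails (routing $Q_1$ along it would double-cross each $Q_j$-edge already meeting $P$ there), and your swap criterion---``fewer crossings on the $P$ arc than on the $Q_1$ arc''---points the wrong way: after the exchange $|w_P|$ changes by $-2-a+b$ with $a<b$, which need not be negative. It is not clear how an ``innermost pair plus induction on~$k$'' simultaneously (i) shortens $w_P$, (ii) does not lengthen some $w_{Q_j}$, and (iii) respects the one-crossing-per-edge constraint on every affected edge; your sketch addresses none of these. The paper bypasses the whole difficulty with a geometric shortcut: deform the plane so that $P$ is a straight segment, enclose the crossings of the subword in a strictly convex region~$\mathcal R$ meeting only $P,Q_1,\dots,Q_k$, and replace each $Q_i$'s passage through $\mathcal R$ (first entry to last exit, in path order) by a straight chord. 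Two chords in a convex region meet at most once, so afterwards $P$ is crossed at most $k$ times in~$\mathcal R$ and each $Q_i$ has at most $k$ crossings there---no more than the $\ge k$ it already had with $P$ alone---so nothing increases and the $\ge k$ edges each $Q_i$ had in $\mathcal R$ suffice to absorb the new crossings one per edge. One simultaneous straightening handles all $k$ paths, with no nested swaps and no induction.
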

\begin{proof}
We need only show that given a maximal degree two path from $u$ to $v$ with more than $2p! - 1$ crossings, we can reduce the number of times that it is crossed without increasing the crossing count on other degree two paths.

First, we continuously deform the plane such that the path from $u$ to $v$ is a straight line. This is possible since we may assume that maximal degree two paths do not self intersect by Lemma~\ref{lem:uncrossing}. Now we consider the sequence of crossings through the path from $u$ to $v$ my other maximal degree two paths. In this sequence there are at most $p$ symbols. So if the number of crossings on the path from $u$ to $v$ is greater than $2p! - 1$, Lemma~\ref{lem:sequence-reduction} implies that there is a subword on $p'$ symbols such that every symbol appears at least $p'$ times. 

Now, we construct a strictly convex region $\mathcal{R}$ around the crossings represented by this word such that only paths represented in the word intersect the region, and such that a path does not reintersect the path from $u$ to $v$ without first leaving $\mathcal{R}$. For every path we shortcut it from the first time it intersects $\mathcal{R}$ to the last time it intersects $\mathcal{R}$, in path order, with a straight line. So now each path in $\mathcal{R}$ is a straight line, and therefore they can only intersect each other at most once. So, we have reduced the number of crossings on the path from $u$ to $v$, without increasing the crossings on the other paths.
\end{proof}

\begin{lemma}
Let $G$ be a graph with cyclomatic number $k$. Then in linear time we can transform $G$ into a kernel $G_C$ of size $O((3k-3)(3k-3)!)$ such that $G$ is $1$-planar if and only if $G_C$ is $1$-planar. In addition, a $1$-planar drawing of $G_C$ may be transformed into a $1$-planar drawing of $G$ in linear time.
\end{lemma}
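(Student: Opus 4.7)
The plan is to reduce the graph by first pruning pendants and then shortening each maximal degree two path down to a length determined by the crossing bound of Lemma~\ref{lem:crossing-bound}. Concretely, I would begin by repeatedly deleting degree-one vertices from $G$, since such a vertex can always be drawn immediately next to its unique neighbor without introducing a crossing, so pruning leaves preserves 1-planarity in both directions and can be undone in linear time when reconstructing the drawing of $G$. After this pruning, Lemma~\ref{lem:vertex-count} guarantees that $G$ has at most $2k-2$ vertices of degree greater than two and at most $p := 3k-3$ maximal degree two paths.

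Next, I would process each maximal degree two path independently. For each such path with strictly more than $2p! - 1$ internal vertices, I would ``contract'' it by deleting interior degree-two vertices and reconnecting their neighbors until exactly $2p! - 1$ interior vertices remain (so the path has $2p!$ edges). Since only high-degree vertices and path interiors remain, the resulting kernel $G_C$ has $O(k) + p \cdot O(p!) = O((3k-3)(3k-3)!)$ vertices, matching the stated bound, and the reduction is clearly implementable in linear time with radix sort and pointer manipulation.

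For the equivalence, the forward direction is immediate: if $G$ is 1-planar, then $G_C$ is a topological minor of $G$ obtained by smoothing degree-two vertices along each path, and smoothing preserves 1-planarity. The backward direction is the main obstacle and is where Lemma~\ref{lem:crossing-bound} does the heavy lifting. Given a 1-planar drawing of $G_C$, I would first apply Lemma~\ref{lem:crossing-bound} (the number of maximal degree two paths of $G_C$ is still at most $p = 3k-3$, since shortening does not create or destroy branch vertices) to obtain a drawing in which every maximal degree two path of $G_C$ is crossed at most $2p! - 1$ times. Because each such path in $G_C$ contains $2p!$ edges but only $2p! - 1$ crossings, by pigeonhole at least one edge $e$ of each path is uncrossed. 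I can then subdivide $e$ into as many pieces as needed, placing the restored degree-two vertices of the corresponding path of $G$ along a short arc following $e$; since $e$ was uncrossed, none of the new subdivision edges are crossed either, and the resulting drawing of $G$ remains 1-planar.

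The final step is to reinsert the previously pruned degree-one vertices as leaves adjacent to their parents inside a face, which again adds no crossings. Each of these expansion steps is local and can be carried out in time proportional to the number of vertices restored, giving an overall linear reconstruction. The main subtlety I would want to verify carefully is that the crossing sequence reasoning of Lemma~\ref{lem:crossing-bound} genuinely applies to $G_C$ using $p \le 3k-3$ rather than some larger parameter, but this follows from the fact that $G$ and $G_C$ have exactly the same set of maximal degree two paths, just with different internal lengths.
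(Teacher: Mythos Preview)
Your overall plan matches the paper's: prune to the 2-core, use Lemma~\ref{lem:vertex-count} to bound the number $p\le 3k-3$ of maximal degree-two paths, and shorten each such path to length $O(p!)$ using Lemma~\ref{lem:crossing-bound}. The size bound and the linear-time implementation are fine.

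However, you have the two directions of the equivalence reversed in difficulty, and your forward argument contains a genuine error. You write that ``smoothing preserves 1-planarity'' and hence that $G$ 1-planar $\Rightarrow$ $G_C$ 1-planar is immediate. This is false: 1-planarity is \emph{not} closed under topological minors. For instance, any graph becomes 1-planar after subdividing every edge sufficiently often (draw it with finitely many crossings and subdivide so that each sub-edge is crossed at most once), yet $K_7$ is not 1-planar; so a subdivision of $K_7$ is a 1-planar graph whose smoothing is not. In your situation nothing a priori prevents a maximal degree-two path of $G$ from carrying far more than $2p!$ crossings in some 1-planar drawing, in which case contracting it down to $2p!$ edges could force a doubly-crossed edge.

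The fix is exactly what the paper does: apply Lemma~\ref{lem:crossing-bound} to $G$ itself. It yields a 1-planar drawing of $G$ in which every maximal degree-two path carries at most $2p!-1$ crossings; since the shortened path still has at least $2p!$ edges, one can group the edges of the long path into $2p!$ consecutive blocks each containing at most one crossed edge and contract each block, obtaining a 1-planar drawing of $G_C$. Conversely, the direction you labor over is actually the easy one: $G$ is obtained from $G_C$ by subdividing edges and re-attaching leaves, and both operations trivially preserve 1-planarity (a crossed edge can be subdivided next to its crossing). Your pigeonhole argument on $G_C$ is correct and is a clean way to locate an uncrossed edge for the explicit linear-time reconstruction of the drawing, but it is not needed for the bare existence statement.
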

\begin{proof}
We remove degree one vertices from $G$ until no more are left, producing the \emph{$2$-core} of $G$~\cite{Sei-SN-83}. This process can be done in linear time by maintaining a queue of degree one vertices. A degree one vertex may be added to any drawing without introducing crossings, so a graph has a $1$-planar drawing if and only if its $2$-core has a $1$-planar drawing.

Lemma~\ref{lem:vertex-count} implies that we have  at most $p = 3k - 3$ maximal degree two paths. For each of these maximal degree two paths we reduce the number of degree two vertices to $2p! + 1$ if they exceed this amount. Since Lemma~\ref{lem:crossing-bound} guarantees that, if $G$ is $1$-planar, then it has a drawing such that no maximal degree two path is crossed more than $2p! - 1$ times, this reduction does not change the $1$-planarity of the graph. Thus, we have a kernel $G_C$ of size $O((3k-3)(3k-3)!)$ such that $G$ is $1$-planar if and only if $G_C$ is $1$-planar.
\end{proof}

\begin{theorem}\label{thm:cyclomatic}
We can test the $1$-planarity of a graph with cyclomatic number $k$ in time $O\left(n + 2^{O((3k)!)}\right)$.
\end{theorem}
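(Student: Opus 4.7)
The plan is to combine the three ingredients already assembled: compute the cyclomatic number (trivially, as $k=m-n+c$ in linear time), apply the kernelization lemma immediately above to shrink the input to a graph $G_C$ whose size depends only on $k$, and then run the exponential-time exact algorithm of Lemma~\ref{lem:exact-algorithm} on $G_C$.

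First I would argue the preprocessing. Given $G$ on $n$ vertices, in $O(n+m)$ time we can compute the number of connected components and hence $k$; if $m$ ever exceeds $4n$ we reject by Pach and Tóth's edge-density bound as in Lemma~\ref{lem:exact-algorithm}, so we may assume $m=O(n)$ and the preprocessing is $O(n)$. Next, the kernelization lemma preceding this theorem produces, in linear time, a kernel $G_C$ that is 1-planar if and only if $G$ is, and whose vertex count is $O((3k-3)\cdot(3k-3)!)=O((3k)!)$. This already explains the additive $n$ term in the running time.

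Then I would invoke Lemma~\ref{lem:exact-algorithm} on $G_C$. That lemma tests 1-planarity of an $N$-vertex graph in time $2^{O(N)}$, so on $G_C$ with $N=O((3k)!)$ vertices this takes time
\[
2^{O((3k)!)}.
\]
Adding the linear-time kernelization gives the claimed $O(n+2^{O((3k)!)})$ bound. Since the kernelization lemma also guarantees that a 1-planar drawing of $G_C$ can be lifted to one of $G$ in linear time, correctness follows directly: $G$ is 1-planar iff $G_C$ is, and the exact algorithm decides the latter exactly.

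There is no real obstacle beyond bookkeeping; the only minor point to mention is that the cyclomatic number is trivial to compute (unlike, say, tree-depth or vertex cover), so we do not incur any extra FPT overhead from parameter computation, and the two $O(n)$ steps (2-core extraction plus path-length truncation from the kernelization lemma) compose to the single $O(n)$ term in the final bound.
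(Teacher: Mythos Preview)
Your proposal is correct and matches the paper's intended argument: the theorem is stated without an explicit proof precisely because it follows immediately from chaining the linear-time kernelization lemma with Lemma~\ref{lem:exact-algorithm}, exactly as you describe. The only addition you make beyond the paper is the remark that computing $k$ itself is trivial, which is a fair point to note.
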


Since a graph can be decomposed into its biconnected components in linear time and edges in separate biconnected components need not cross we have the following corollary to Theorem~\ref{thm:cyclomatic}.

\begin{corollary}
We can test the $1$-planarity of a $k$-almost tree in time $O(n2^{O((3k)!)})$.
\end{corollary}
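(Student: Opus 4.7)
The plan is to reduce 1-planarity testing of a $k$-almost tree to separate instances of 1-planarity testing on its biconnected components, each of which has cyclomatic number at most $k$ by the definition of $k$-almost tree, and then invoke Theorem~\ref{thm:cyclomatic} on each piece.

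First I would decompose the input graph $G$ into its biconnected components, which can be done in $O(n)$ time via a standard depth-first search based algorithm. Next I would argue that $G$ is 1-planar if and only if every biconnected component of $G$ is 1-planar. In one direction, any 1-planar drawing of $G$ restricts to a 1-planar drawing of each biconnected component. In the other direction, given a 1-planar drawing of each biconnected component, the components can be glued together at their shared articulation vertices by placing each component in a small disk incident to its articulation vertex and disjoint from all other components, so that edges in different components never cross; this is the observation stated in the sentence preceding the corollary.

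Then I would apply Theorem~\ref{thm:cyclomatic} to each biconnected component $B_i$. By the definition of a $k$-almost tree, every biconnected component has cyclomatic number at most $k$, so testing 1-planarity of $B_i$ on $n_i$ vertices takes time $O(n_i + 2^{O((3k)!)})$. Summing over all biconnected components, and using that $\sum_i n_i = O(n)$ and that there are at most $O(n)$ biconnected components, yields a total running time of $O(n \cdot 2^{O((3k)!)})$, as claimed.

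There is essentially no obstacle here: the only subtlety is verifying that 1-planarity decomposes cleanly across biconnected components, but this follows directly from the fact that the shared articulation vertex allows us to place each component in its own region of the plane without forcing any inter-component crossing. All remaining steps are immediate consequences of Theorem~\ref{thm:cyclomatic} and the standard linear-time block decomposition.
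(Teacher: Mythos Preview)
Your proposal is correct and follows exactly the approach indicated in the paper: decompose into biconnected components in linear time, observe that 1-planarity can be tested componentwise because edges in different blocks need not cross, and apply Theorem~\ref{thm:cyclomatic} to each block (whose cyclomatic number is at most $k$ by definition of $k$-almost tree). The paper states this in a single sentence preceding the corollary; you have simply spelled out the details.
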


\subsection*{Acknowledgements}
The research of Bannister and Eppstein was supported in part  by the National Science Foundation under grants 0830403 and 1217322, and by the Office of Naval Research under MURI grant N00014-08-1-1015. 
The research of Cabello was supported in part by the Slovenian Research Agency, program P1-0297, project
J1-4106, and within the EUROCORES Programme EUROGIGA (project GReGAS) of the European Science Foundation. We also gratefully acknowledge the Slovenian Research Agency for travel funds allowing the authors to meet and perform this research.

{\raggedright
\bibliographystyle{abuser}
\bibliography{one-planarity}}

\ifFull
\newpage
\appendix

\section{Proof of Lemma~\ref{lem:exact-algorithm}.}
\label{app:lemma2}

\begin{proof}[Lemma~\ref{lem:exact-algorithm}]
If the graph has more than $4n-8$ edges, we immediately return that 
it is not 1-planar~\cite{PacTot-Comb-97}. 
Otherwise, we proceed with a divide and conquer algorithm, as follows.

Consider a drawing of $G$ and a circular sequence $\pi$ of vertices of $G$.
A \emph{$\pi$-curve} (for the drawing) is a simple, closed curve in the plane visiting
the vertices of $\pi$, in the order given by $\pi$, and otherwise disjoint from the drawing.
In particular, the curve does not intersect the interior of any edge.

We first use cycle separators to argue the following: 
in any 1-planar drawing of $G$ there exists a $\pi$-curve, with $\pi$ of length $O(\sqrt{n})$,
that has at most $2|E|/3$ edges in the interior and at most $2|E|/3$ edges in the exterior. 
We call such curve a \emph{balanced separating curve} for the drawing.
The existence of such curve follows from the result of Miller~\cite{Mil-JCSS-86}. Consider the 
planarization $G_P$ of the 1-planar drawing of $G$, where each intersection is replaced by a vertex,
take its vertex-face incidence embedded graph $\Gamma$, which is 2-connected, and consider the 
curve $\alpha$ described by a cycle separator in $\Gamma$ splitting the faces of $\Gamma$ 
in a balanced way. Since each face of $\Gamma$ corresponds to an edge of $G_P$, the
curve $\alpha$ gives a balanced partition of the edges of $G_P$. 
The curve $\alpha$ passes through vertices of $G_P$, so it may pass through some crossing of the 
drawing. For each crossing, say between edges $e$ and $e'$, 
we make a local rerouting of $\alpha$ in the neighborhood of $e$ and $e'$ so
that it intersects the drawing of $G$ only at the endpoints of $e$ and $e'$. 
Such rerouting is possible because 1-planarity implies that $e$ and $e'$ 
cannot participate in any other crossing. See Figure~\ref{fig:separator}.
Each such rerouting replaces a portion of the curve passing through a crossing by a portion passing
through at most two vertices of $G$, so the resulting curve passes through $O(\sqrt{n})$ vertices
of $G$. This finishes the proof of existence of balanced separating curves in 1-planar drawings.

\begin{figure}[t]
\centering\includegraphics[width=.7\textwidth]{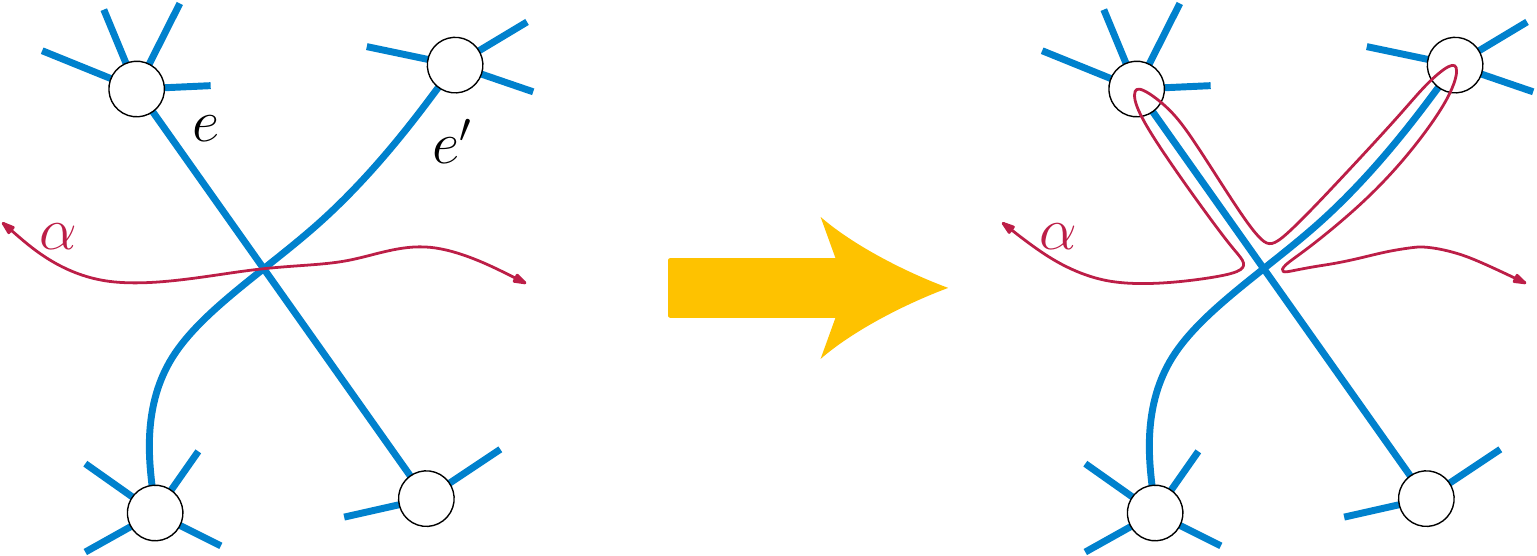}
\caption{Converting a separator for the planarization $G_P$ into a balanced separating curve.}
\label{fig:separator}
\end{figure}

For the divide and conquer algorithm, 
let us consider a more general problem: given an $n$-vertex graph $G=(V,E)$ and a subset $F\subseteq E$,
is there a 1-planar drawing of $G$ where no edge of $F$ participates in any crossing? Let us tell that $\phi(G,F)$ is true when such drawing exists, and false otherwise. We can compute $\phi(G,F)$ recursively by trying all possible balanced separating curves and edge partitions, as follows.

Consider a circular sequence $\pi=u_0,\dots ,u_k,u_0$ of vertices of $G$
and a partition of the edge set $E=E_1\sqcup E_2$.
Let $H$ be the graph obtained from the cycle $u_0\dots u_ku_0$ by adding a vertex $u'$ connected to all vertices $u_0,\dots, u_k$.
For $i=1,2$, let $G_i=H+E_i$ and $F_i=E(H)\cup (F\cap E_i)$.
It holds $\phi(G_1,F_1)$ and $\phi(G_2,F_2)$ are true if and only if $G$ has a 1-planar drawing where no edge of $F$ participates in a crossing and there is $\pi$-curve separating $E_1$ from $E_2$. Moreover, in linear time we can combine drawings of $G_1$ and $G_2$ certifying that $\phi(G_1,F_1)$ and $\phi(G_2,F_2)$ are true to obtain a 1-planar drawing of $G$ certifying that $\phi(G,F)$ is true. 

Because of the existence of balanced separating curves for $1$-planar drawings we have  
\[
	\phi(G,F) ~=~ \bigvee_{\pi, E_1,E_2} \left(  \phi(G_1,F_1)\wedge \phi(G_2,E_2)\right),
\]
where $\pi$ ranges over all sequences of $O(\sqrt{n})$ distinct vertices and $E_1,E_2$
over all partitions of $E$ with at most $2|E|/3$ edges each.
This means that $\phi(G,F)$ can be obtained solving $O(n^{\sqrt{n}}2^{|E|})=2^{O(n)}$ subproblems,
each with at most $2|E|/3+O(\sqrt{n})$ edges. We thus get, when $|E|$ is larger than some constant,
the recursion
$T(|E|)\le 2^{O(n)} T(2|E|/3+O(\sqrt{n}))$, which solves to $T(|E|)\le 2^{O(n)}$.
\end{proof}

\section{Bandwidth}

If the vertices of a graph $G$ are arranged on the real line with distinct integer coordinates, the \emph{bandwidth} of the arrangement is the maximum length of an edge of $G$. The bandwidth of the graph $G$ itself is the minimum, over all possible linear arrangements of $G$, of the length of the longest edge in the arrangement. The bandwidth may also be defined as one less than the minimum clique number of any proper interval graph having $G$ as a subgraph, a formulation that makes clear the relation between bandwidth, pathwidth (the same notion with interval graphs in place of proper interval graphs), and treewidth (the same notion with chordal graphs in place of interval graphs)~\cite{KapSha-SJC-96}.

In this section we show that 1-planarity remains \NP-complete even when restricted to graphs of bounded bandwidth.
Graphs of bounded bandwidth also have bounded pathwidth, treewidth, and clique-width, so 1-planarity is also hard for those parameters.

\subsection{Overview}
\label{sec:npc-ov}

Our proof of \NP-completeness of 1-planarity for graph planarity is based on a standard gadget-based reduction from 3-satisfiability, but because of the complexity of the gadgets we break the proof up into several steps. In this subsection we outline the general idea of the reduction.

The overall structure is a graph $G$ of bounded bandwidth with three parts: one part for the clauses of the 3-satisfiability instance (blue in Figure~\ref{fig:npc-overview}), one part for the variables of the instance (red in the figure), and one part that (despite its low bandwidth) forms a grid-like structure that holds the other two parts in their places (black in the figure). The variable and clause gadgets will form subgraphs that are only attached to the grid gadget at one end (with the points of attachment not all lying within a single face of the grid); the points of attachment are shown as small green circles in the upper left of the figure.

Although not attached graph-theoretically to the rest of the grid, the variable and clause gadgets will still interact with the rest of the grid by the crossings that are allowed between their edges.
Because of these allowed crossings, the variable part of graph $G$ will be forced to zigzag horizontally back and forth across the grid; every horizontal stretch of this part will correspond to a single variable from the 3-satisfiability instance. Again, because of its interactions with the grid, the clause part of graph $G$ will be forced to zigzag vertically up and down across the grid; every vertical stretch of this part will correspond to a single clause from the 3-satisfiability instance.

\begin{figure}[t]
\centering\includegraphics[height=3in]{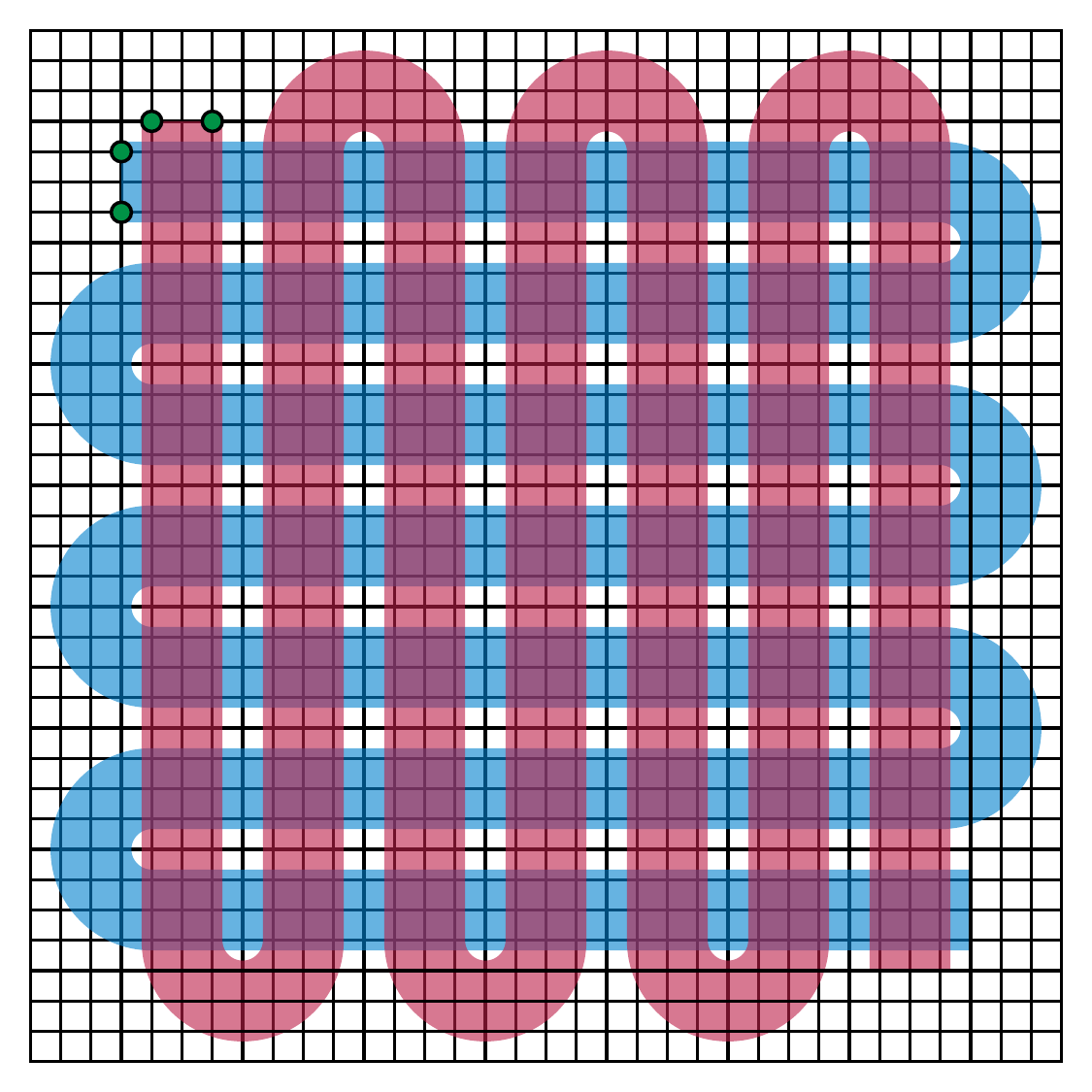}
\caption{Global structure of the graph produced by the \NP-completeness proof for graphs of bounded bandwidth: a grid structure (black) holding in place two paths that zigzag across each other, with each straight segment of these paths either representing a variable (blue) or a clause (red) of a 3-satisfiability instance. The small green circles near the top left indicate the points at which the variable and clause gadgets are attached to the grid; the remaining shape of the variable and clause gadgets is controlled by their allowed crossings with the grid.}
\label{fig:npc-overview}
\end{figure}

Within the variable and clause gadgets of this structure, we will incorporate smaller gadgets that allow each variable or clause gadget to twist relative to itself, so that it has two different possible orientations (twisted or untwisted) as it passes across the grid. In particular, each individual variable clause (a single horizontal stretch of the zigzag pattern formed by the variable part of the graph) will have twist gadgets at either of its two ends, allowing it to take either of these two possible orientations freely. We will use those orientations to encode the truth value of the variable. Each clause gadget will be forced (by its interaction with the grid at either end of its vertical stretch) to have at least one twist, so that the top and the bottom ends of the clause gadget have opposite orientations to each other. We will place twist gadgets within each clause gadget that allow it to have such a twist only when one of the variables has a truth value that satisfies it.

In short, then, we have a grid component of the graph which exists in order to guide the layout of the other parts and make them cross each other in the correct locations. We have a variable gadget for each variable of the 3-sat instance that may take on one of two different orientations according to the 1-planar embedding of the twist gadgets at either of its ends. And, we have a clause gadget for each clause of the 3-sat instance that must twist at least once, and can only twist at the point where it crosses a variable gadget of a variable that belongs to the clause and that has an orientation corresponding to a truth assignment to that variable that would satisfy that clause. As we will describe, it is possible to find a graph of bounded bandwidth that contains gadgets of all these types, and that allows no other 1-planar embeddings than the ones intended to exist as part of the construction. As a result, the graph constructed in this way will have a 1-planar embedding if and only if the given 3-satisfiability instance is satisfiable. This reduction, when complete, will prove the \NP-completeness of 1-planarity for graphs of bounded bandwidth.

\subsection{Crossing control}
\label{sec:npc-cc}

Our \NP-completeness proof involves crossings between several different types of gadgets, and it will be helpful to have some fine-level control of how these crossings may occur. To do so, we introduce a variant of 1-planarity, which we call \emph{colored 1-planarity}, in which the input instance is augmented with edge colors that describe which crossings are allowed. Specifically, in the colored 1-planarity problem, we assume that the edges are labeled from a finite set of colors. One designated color (black, say) is not allowed to participate in any edge crossings; otherwise, an edge may only cross another edge of the same color. The task is to determine whether the given graph has a 1-planar embedding satisfying these color constraints.

\begin{figure}[t]
\centering\includegraphics[height=2in]{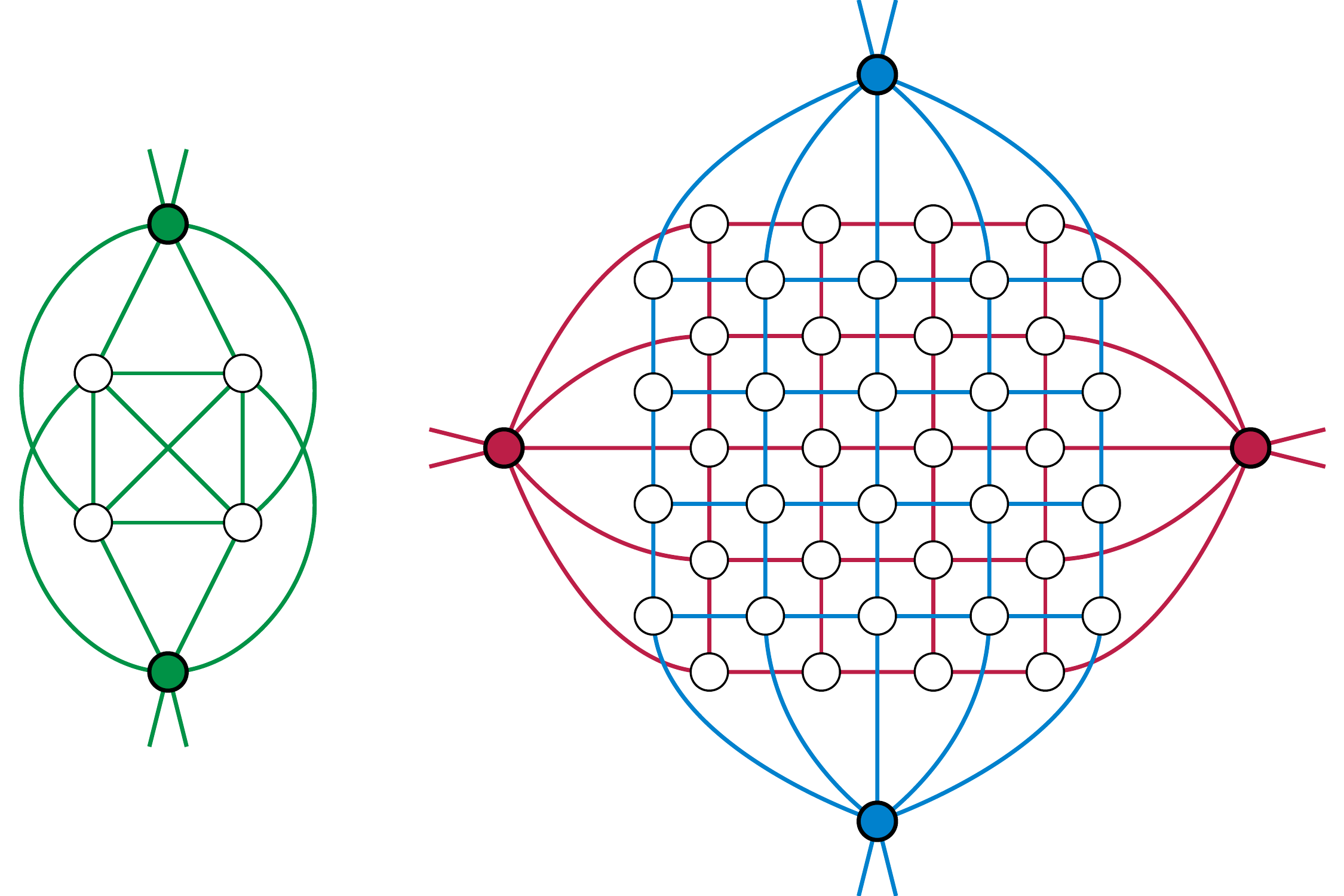}
\caption{Gadgets for reducing colored 1-planarity to uncolored 1-planarity. Left: an uncrossable edge gadget. Right: two crossing grids. In each case the colored vertices indicate the endpoints of the original edge while the uncolored vertices are part of the gadget added to replace it.}
\label{fig:crossing-control}
\end{figure}

\begin{lemma}
\label{lem:crossing-control}
An instance of colored 1-planarity may be reduced to an instance of 1-planarity without colors, preserving the existence or nonexistence of a valid 1-planar drawing, in such a way that the bandwidth of the uncolored instance is $O(1)$ times the bandwidth of the colored instance.
\end{lemma}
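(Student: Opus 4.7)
The plan is to replace each edge of the colored instance by one of the two gadgets sketched in Figure~\ref{fig:crossing-control}: an \emph{uncrossable-edge gadget} for every black edge, and a per-color \emph{crossing-tile gadget} for every colored edge. First I would construct the uncrossable-edge gadget as a small planar subgraph whose vertex set contains the two original endpoints $u$ and $v$ and which is internally dense enough to force every edge of the gadget to be drawn without any crossing. A convenient realization is to connect $u$ and $v$ by several internally disjoint length-two paths (giving a $K_{2,i}$ subgraph) together with a planar frame of further edges; Lemma~\ref{lem:K2i-planar} and a local density/pigeonhole argument then rule out any crossing through the gadget in any 1-planar drawing.

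The crossing-tile gadget, pictured on the right of Figure~\ref{fig:crossing-control}, routes each original edge through a sequence of small tiles, with one central tile reserved for each pair of edges that could potentially cross. Every tile carries a color label and is built out of copies of the uncrossable-edge gadget arranged so that precisely one interior region remains available for a crossing, and only when both traversing edges carry matching color labels; two edges with incompatible labels meeting at the same tile would force one of the surrounding uncrossable edges to be crossed, which is forbidden. Once this is established, a 1-planar drawing of the transformed graph corresponds bijectively (up to redrawing) to a valid colored 1-planar drawing of the original instance, so the two problems are equivalent.

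For the bandwidth bound, I would fix a linear arrangement of the colored instance $G$ of bandwidth $b$ and keep the original vertices in their original relative order. The $O(1)$ new vertices belonging to the gadget replacing an edge $uv$ with $|i-j|\le b$ would then be inserted into a constant number of fresh positions interleaved between the original positions in the interval $[i,j]$. Since each inserted slot absorbs only a bounded number of gadget vertices and each gadget edge spans only a constant number of slots beyond those spanned by the original edge, the bandwidth of the uncolored instance is at most $O(1)\cdot b$.

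The main obstacle I expect is the correctness analysis of the crossing-tile gadget. One must enumerate the combinatorial types of 1-planar embeddings of a tile, verify that every forbidden crossing pattern (in particular any crossing between two edges of different colors) contradicts the uncrossability guaranteed by the surrounding uncrossable-edge gadgets, and also verify that every permitted crossing pattern is realizable by an explicit local drawing. The uncrossable-edge gadget itself is essentially standard and the bandwidth bookkeeping is routine; the delicate part is engineering the tile so that its two opposing functions---permitting same-color crossings while forbidding mixed-color ones---hold simultaneously and robustly against all possible rerouting in a 1-planar drawing.
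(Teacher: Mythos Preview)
Your high-level plan---a rigid uncrossable gadget for black edges and a colour-specific crossable gadget for the remaining colours---matches the paper, but the crossable gadget you sketch has two problems and misses the mechanism the paper actually uses.

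First, you describe the gadget for an edge as ``a sequence of small tiles, with one central tile reserved for each pair of edges that could potentially cross,'' yet later assert that each gadget contributes only $O(1)$ new vertices. These are inconsistent: in a graph of bandwidth $b$ an edge may overlap $\Theta(b)$ others, so per-pair tiles force gadgets of size $\Theta(b)$ and your bandwidth bookkeeping no longer yields a constant-factor blowup. More fundamentally, reserving a tile per potential crossing pair presupposes combinatorial information about the drawing that a purely local edge-replacement cannot use.

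Second, you never say what makes two tiles of the same colour compatible while tiles of different colours are not; ``built out of copies of the uncrossable-edge gadget'' does not by itself discriminate colours. This is the crux of the lemma, and the paper handles it with one clean idea: assign to each non-black colour $c$ an integer $i_c$ and replace every edge of colour $c$ by an $i_c\times(i_c{+}1)$ grid whose two extreme columns attach to the edge's endpoints. Two grids of the \emph{same} dimensions can interlock in a 1-planar crossing (the right half of Figure~\ref{fig:crossing-control}), whereas two grids of \emph{different} dimensions cannot cross without some grid edge being crossed twice. Colour discrimination thus comes for free from the grid size; there is exactly one gadget per edge, of size depending only on the constant number of colours, and no per-pair reservation is needed.
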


\begin{proof}
We replace each edge of the uncrossable color by a gadget whose unique 1-planar embedding does not allow it to be crossed by any other part of a drawing (Figure~\ref{fig:crossing-control}, left). For each other color, we choose an integer $i$ and replace each edge of that color by a grid with $i$ rows and $i+1$ columns, with the two endpoints of the edge connected to the grid points in the two extreme columns of the grid (for instance the red grid in Figure~\ref{fig:crossing-control}, right). Two grids of the same size may cross each other, as shown in the Figure, but it is not possible for grids of different sizes to cross.
\end{proof}

\subsection{Double spiral grid}
\label{sec:npc-grid}

\begin{figure}[t]
\centering\includegraphics[height=3in]{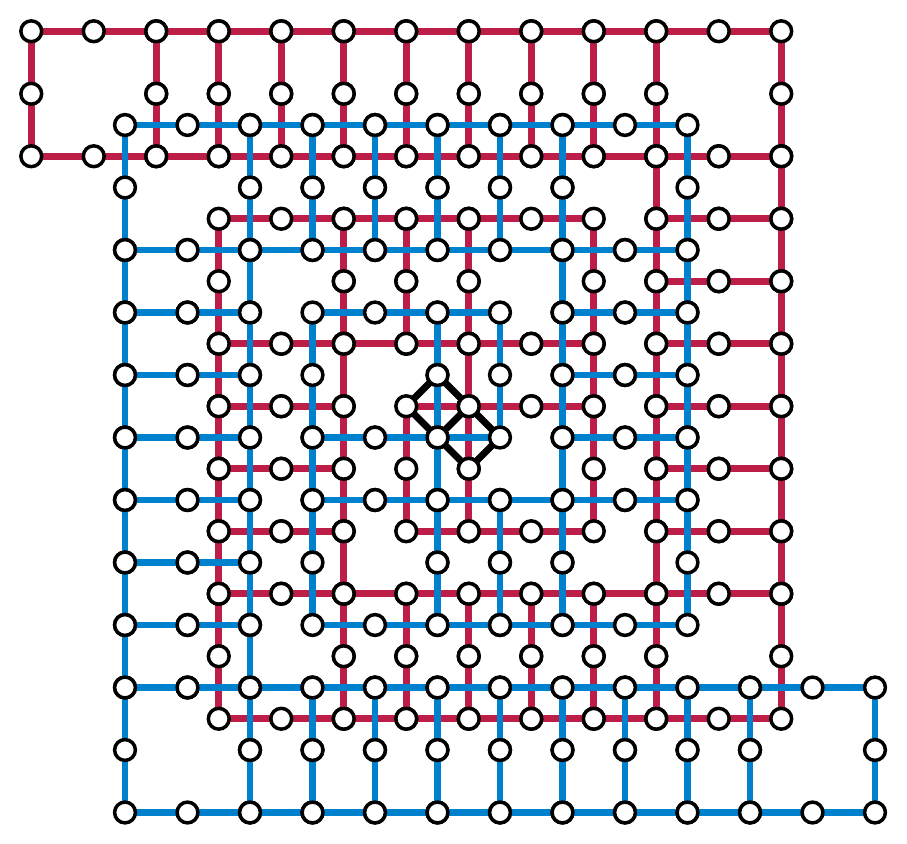}
\caption{Double spiral grid structure. Colors are used to show the two arms or the spiral, but they are unrelated to color 1-planarity.}
\label{figure:low-bandwidth-2x-spiral}
\end{figure}

To form a grid-like substrate on which to build our reduction from 3-satisfiability, we use the double spiral pattern shown in Figure~\ref{figure:low-bandwidth-2x-spiral}, in which two arms spiral from the center of the pattern. We may color the edges of this pattern with a constant number of colors (with color $0$, denoting uncrossable edges, used for edges that do not cross any others) and apply Lemma~\ref{lem:crossing-control} in order to find a graph with the same spiral structure as the figure in which only the crossings shown by the figure are allowed. With this edge coloring, the red and blue spiral arms of the figure (both of which have bounded bandwidth) are forced to continue crossing each other as they spiral around the center, with the leading edge of one arm crossing through the trailing edge of the other arm.

The two arms together generate a grid-like pattern of unbounded size. It is not itself a grid graph (which would have unbounded bandwidth) but its planarization (the planar graph formed by replacing each crossing by a vertex) contains a subdivision of a grid whose size is proportional to the number of windings of the spiral. By additional subdivisions of edges into paths and by using colored 1-planarity, we may allow edge crossings between edges in the double spiral that do not participate in this grid subdivision and edges from other components of the reduction, without allowing unplanned crossings between pairs of edges that both belong to the double spiral. In this way, we may make the parts of the spiral that do not participate in this grid be transparent to the remaining components of the reduction, making the spiral pattern function as a perfect grid for the purposes of describing its interactions with these other components.

In the full reduction, we will also replace some of the edges of this grid subdivision by paths in a colored 1-planarity instance, using a color for each path edge that does not appear anywhere else within the nearby faces of the pattern, in order to allow this grid structure to be crossed in a controlled way by the variable and clause gadgets of the reduction. By using Lemma~\ref{lem:crossing-control} this replacement can be done in a way that does not affect the set of valid 1-planar embeddings of the grid itself.

In order to control the bandwidth of the graph formed by combining this grid gadget with the variable and clause gadgets of the reduction, we require that the attachment points where the variable and clause gadgets attach to the grid gadget (the green circles of Figure~\ref{fig:npc-overview}) lie near each other within the same spiral arm of the grid gadget. This constraint will not cause any theoretical difficulties in the construction.

\subsection{Variable and crossover gadgets}

\begin{figure}[t]
\centering\includegraphics[height=1.5in]{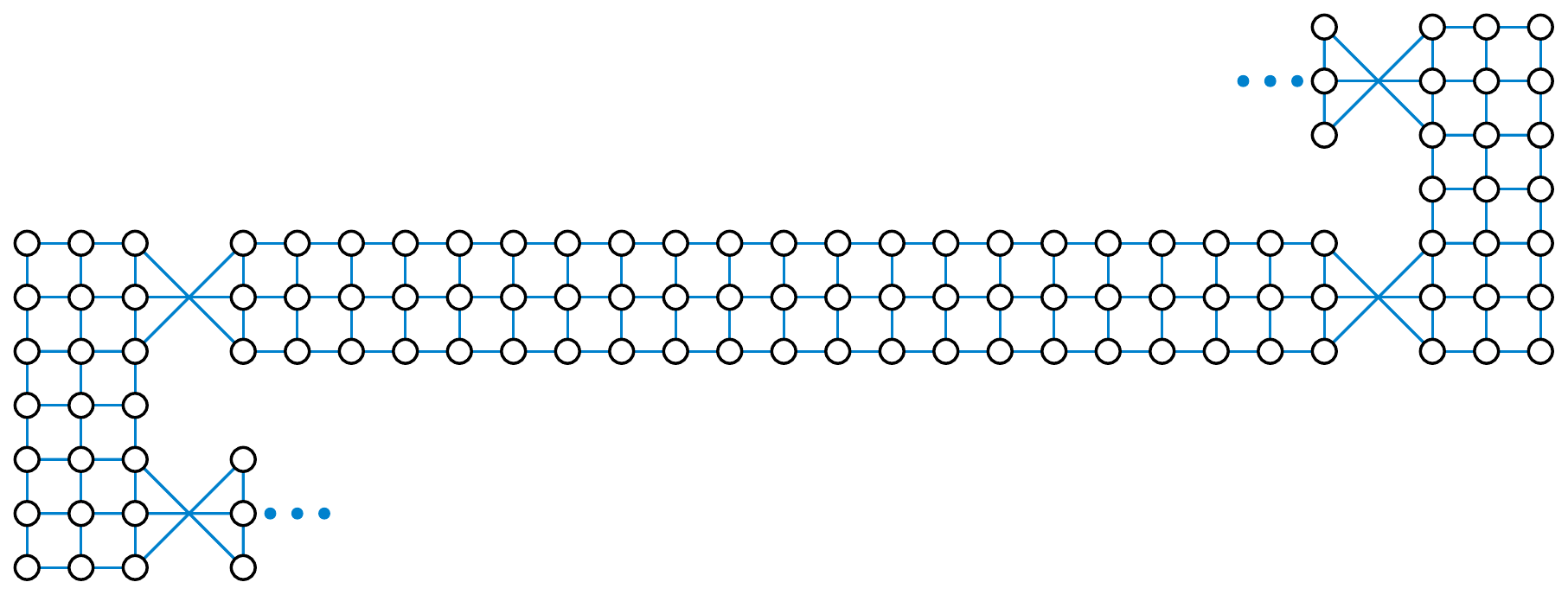}
\caption{Schematic view of the gadget for a single 3SAT variable, consisting of a rigid $3\times i$ grid with crossover gadgets at either end.}
\label{fig:var-gadget-schematic}
\end{figure}

As discussed in Section~\ref{sec:npc-ov}, we will represent the variables of a 3SAT instance by a gadget that is forced (by its set of allowed crossings with the grid gadget) to zigzag horizontally back and forth across the grid of Section~\ref{sec:npc-grid}. Each horizontal stretch of this zigzagging pattern will be formed by a gadget for a single variable, in the form of a $3\times i$ grid of vertices (for some value~$i$ chosen sufficiently large to allow this variable gadget to be crossed by each of the clause gadgets). At either end of this grid, we will have crossover gadgets, shown schematically in Figure~\ref{fig:var-gadget-schematic}, that allow the grid to have one of two possible orientations: either of its two rows of squares may form the top row of squares in the drawing, with the other row of squares below it. These two orientations will correspond to the two truth values of the variable.

\begin{figure}[t]
\centering\includegraphics[width=4.5in]{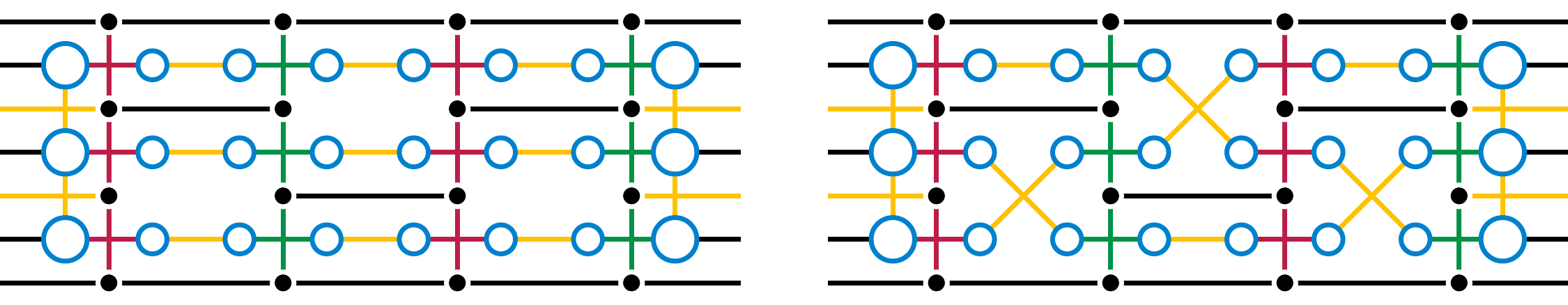}
\caption{The crossover gadget for the two ends of a variable gadget, in its uncrossed (left) and crossed (right) states.}
\label{fig:crossover-gadget}
\end{figure}

Within the variable gadget, there are three parallel and disjoint paths of length $i$, extending horizontally across the gadget. To allow the gadget to have its two different orientations, these three paths must be allowed to cross each other within the crossover gadget. However, it would not work to allow the crossings between these three paths to lie close enough to each other that they belong to a single face of the underlying grid (as might be suggested by the drawing of Figure~\ref{fig:var-gadget-schematic}, which shows all three paths crossing at a single point). The reason is that, if all three paths of the gadget were allowed to pass through a single face of the grid, it would then be possible to continue drawing the rest of the variable gadget, and the sequence of variable gadgets connected to it, all within this same face, thwarting our intended zigzag pattern. In order to control the global shape formed by the variable gadgets as they extend across the grid, we must make sure that only two of the three paths ever lie in a single face of the grid.

A crossover gadget that allows the variable gadget to take either of its two orientations, but does not allow it to escape into a single face of the grid, is shown in Figure~\ref{fig:crossover-gadget}. The colored edges of the figure are used to describe pairs of edges that may cross, in a colored 1-planarity problem, as described in Section~\ref{sec:npc-cc}. Note that the gadget consists of two disconnected subgraph: the blue vertices within a variable gadget and the black vertices within the grid. As long as the three leftmost vertices of the variable gadget are constrained (by earlier parts of the construction) to lie in the three grid faces that they are shown in, the rest of the crossover gadget must extend rightwards across the grid as shown, in order to reach another set of faces from which the three right vertices can be connected to each other. The crossings in the middle of the gadget would allow the three right vertices to be arranged in any of the six possible permutations of the three left vertices, but only the original permutation and its reversal allow a consistent placement of the yellow edges at the right of the figure.

\subsection{Clause and gated crossover gadgets}

\begin{figure}[t]
\centering\includegraphics[width=3in]{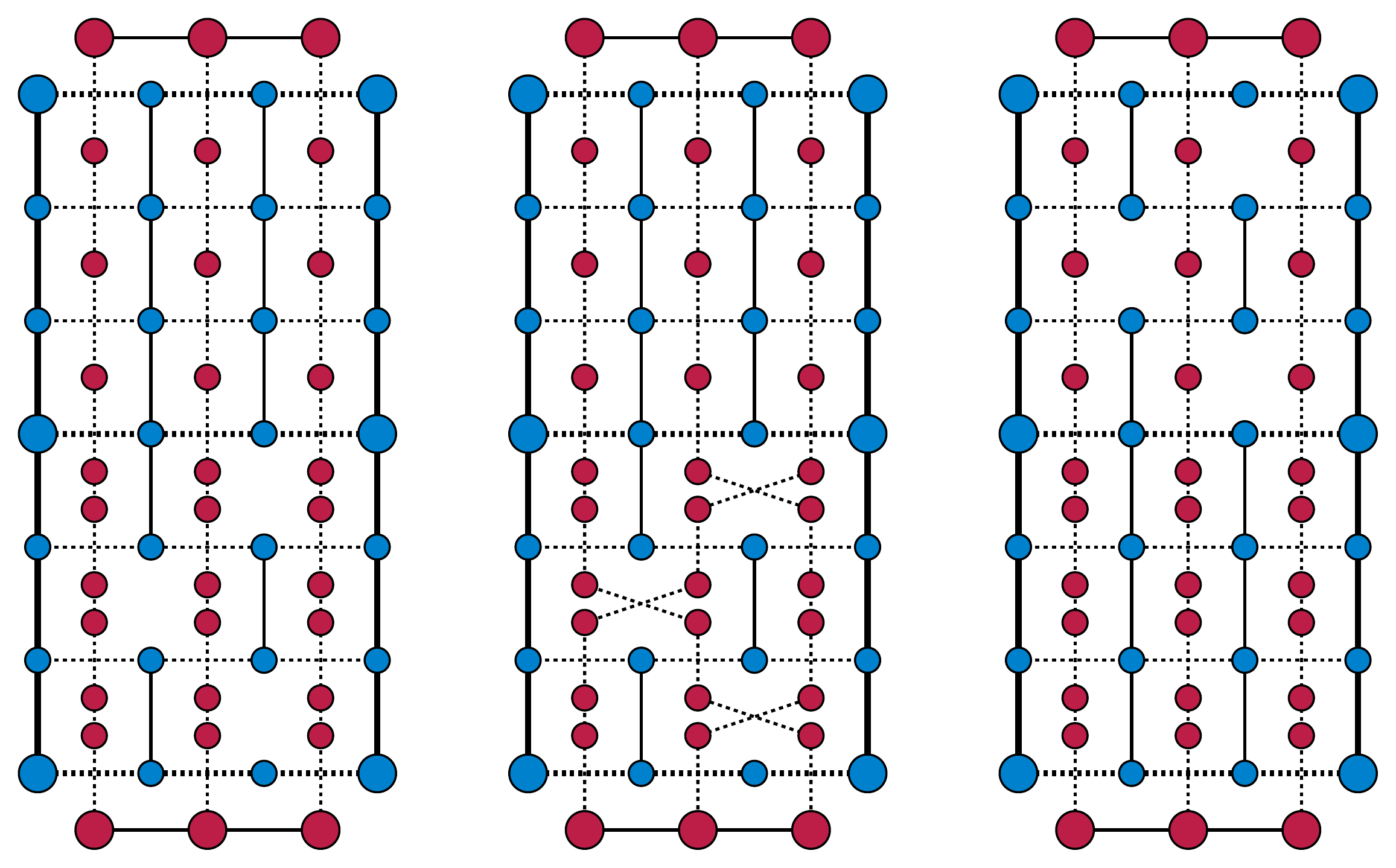}
\caption{Three states of a gated crossover gadget. Left: the variable gadget's orientation corresponds to a truth value that satisfies the clause, but the clause gadget is uncrossed. Center: the variable satisfies the clause, and the clause gadget is crossed. Right: the variable does not satisfy the clause; no crossing is possible. The edges of the figure are colored to form a colored 1-planarity problem that prevents crossings other than the ones shown between crossing pairs of dashed edges. The underlying grid and its crossings within the gadget are not depicted.}
\label{fig:gated-crossover}
\end{figure}

As discussed in Section~\ref{sec:npc-ov}, we will represent the clauses of a 3SAT instance by a gadget that is forced (by its set of allowed crossings with the grid gadget) to zigzag vertically up and down across the grid of Section~\ref{sec:npc-grid}. Each vertical stretch of this zigzagging pattern will be formed by a gadget for a single clause, and (like the variable gadgets) will consist of a $3\times i$ grid of vertices, for some appropriate choice of~$i$, interspersed with \emph{gated crossover} gadgets, one for each term in the clause. The top and bottom connections of this gadget with the adjacent clauses will form more grid graphs that are rigid in their layout with respect to the grid gadget, forcing the clause to twist an odd number of times within its gated crossover gadgets. A twist will only be allowed within a gated crossover gadget when the variable gadget corresponding to one of the terms in the clause has the correct orientation (corresponding to a truth assignment for which that variable satisfies the clause). In this way, it will only be possible to find a 1-planar layout for all of the clause gadgets if the variable gadgets are oriented in a way that corresponds to a satisfying assignment for the whole 3SAT instance.

To form a gated crossover gadget, we replace two vertically-adjacent squares of the grid of a variable gadget, as shown in Figure~\ref{fig:gated-crossover}. One of the two squares (the one that is the bottommost of the two squares for truth assignments that cause the variable to satisfy the clause) is replaced by the grid part of a crossover gadget, rotated by $90^\circ$ from the ones in Figure~\ref{fig:crossover-gadget}, while the other square is replaced by a modified crossover gadget with extra edges that prevents its crossed state from being a valid layout. The clause gadget forms three paths that pass through both of these crossover gadgets, with their edges colored to form a colored 1-planarity instance in such a way that the parts of the paths within the top square are not allowed to cross each other (they can only make the required crossings with the gadget) while the paths may cross each other within the bottom square. Thus, when the gadget is aligned so that the bottom square is the one containing the unmodified crossover gadget, the three clause gadget paths may cross each other or not, but when it is aligned so that the top square is the one containing the unmodified crossover gadget, no crossing is possible.

The parts of the gated crossover gadget that come from the variable and clause gadgets, shown in the figure, are overlaid by additional parts coming from the grid gadget, which force the left and right sides of the variable gadget to spread across multiple grid faces, and also force the top and bottom sides of the clause gadget to spread across multiple grid faces, preventing additional unwanted 1-planar layouts where part of a variable or clause gadget and all the rest of the gadgets connected in sequence to it lie within a single face; we omit the details, as they complicate the gated crossover gadget without significantly affecting the mechanism by which it works.

\subsection{\NP-completeness}

\begin{theorem}
It is \NP-complete to test 1-planarity for graphs of bounded bandwidth.
\end{theorem}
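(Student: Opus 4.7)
Membership in \NP{} is immediate: a 1-planar drawing can be encoded combinatorially (as a rotation system together with the crossing pairs) and verified in polynomial time. For hardness, the plan is to present the full reduction from 3-satisfiability sketched in Sections~\ref{sec:npc-ov}--\ref{sec:npc-cc} and realized concretely by the gadgets of Sections~\ref{sec:npc-grid} through the clause subsection. Starting from a 3SAT instance $\varphi$ with $n$ variables and $m$ clauses, I would first assemble a colored 1-planarity instance consisting of (a) a double spiral grid large enough to accommodate $n$ horizontal variable tracks crossed by $m$ vertical clause tracks, (b) one variable gadget per variable (a $3\times O(m)$ grid flanked by the crossover gadgets of Figure~\ref{fig:crossover-gadget}) attached to the grid at the designated green attachment points, and (c) one clause gadget per clause (a $3\times O(n)$ grid interrupted by three gated crossover gadgets placed at the grid squares where the clause track meets its three literal variables). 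Finally I would apply Lemma~\ref{lem:crossing-control} to convert the colored instance into an uncolored instance whose bandwidth is only a constant factor larger.

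For the forward direction, given a satisfying assignment, I would exhibit an explicit 1-planar drawing: choose the untwisted or twisted state of each variable gadget according to the assigned truth value, and, inside each clause gadget, perform the single mandatory twist inside the gated crossover associated with some satisfied literal (using the left or middle state of Figure~\ref{fig:gated-crossover}), leaving the other gated crossovers in their uncrossed state. For the reverse direction, I would argue that any 1-planar drawing of the constructed graph must conform to the intended global layout: the spiral grid is rigid by the uncrossable-edge construction, each variable gadget is forced by its crossover endpoints to adopt one of exactly two orientations, and each clause gadget is forced by its rigid top/bottom attachments to accumulate an odd number of twists across its three gated crossovers. Because a gated crossover can twist only when its host variable has the orientation corresponding to a satisfying literal, reading off the orientations of the variable gadgets yields a satisfying assignment for $\varphi$.

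To control bandwidth, I would lay out the vertices along the real line by traversing the double spiral from the outermost winding inward, embedding each variable gadget into the linear order at the position of its attachment points, and similarly interleaving each clause gadget with the spiral windings it crosses. Since the spiral, every variable gadget, and every clause gadget each have bounded bandwidth individually, and since each attachment or crossing point couples only a constant number of vertices from different gadgets at any given position, the overall bandwidth stays $O(1)$; Lemma~\ref{lem:crossing-control} then only multiplies this by another constant. Since 1-planarity is preserved under bounded bandwidth $\Rightarrow$ bounded pathwidth $\Rightarrow$ bounded treewidth $\Rightarrow$ bounded clique-width, the same hardness follows for those parameters as well.

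The main obstacle is the reverse direction of correctness, namely ruling out unintended 1-planar drawings. The danger is that a variable or clause gadget might escape into a single face of the grid and drag the chain of gadgets attached to it along, evading the geometric constraints that encode the satisfiability condition. This is precisely what the careful edge coloring in the crossover and gated crossover gadgets is designed to prevent, together with the auxiliary grid overlays alluded to at the end of the gated crossover subsection. The bulk of the formal proof therefore consists of a case analysis showing that, face by face of the spiral grid, the only colored 1-planar configurations consistent with the mandated crossings are the intended ones; once this rigidity is established, the correspondence between satisfying assignments and 1-planar drawings follows from the local analysis of the crossover and gated crossover gadgets already given in Figures~\ref{fig:crossover-gadget} and~\ref{fig:gated-crossover}.
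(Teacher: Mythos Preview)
Your overall reduction, the forward and reverse directions, and the identification of the ``escaping into a face'' issue all match the paper's argument closely. The one real gap is in your bandwidth argument.

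You propose to lay out the spiral linearly and then ``interleave each clause gadget with the spiral windings it crosses,'' justifying this by saying that ``each attachment or crossing point couples only a constant number of vertices from different gadgets.'' But crossings are not graph edges; they impose no constraint whatsoever on a bandwidth layout. The only graph-theoretic connections among the three pieces (grid, variable zigzag, clause zigzag) are the green attachment points. If you nevertheless interleave a clause gadget with the spiral at the geometric positions where it crosses the windings, then two consecutive segments of that clause gadget---which are joined by an edge inside the gadget---end up separated in the linear order by an entire winding of the spiral, i.e.\ by an unbounded number of vertices. So the interleaving scheme, taken literally, destroys the bounded-bandwidth claim.

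The paper exploits exactly the observation you are missing: because the variable zigzag, the grid, and the clause zigzag are graph-theoretically disjoint except at the localized green attachment points, one can simply concatenate three separate bounded-bandwidth layouts---variable zigzag (with its attachment vertices at the right end), then the grid (with the variable attachments at its left end and the clause attachments at its right end), then the clause zigzag. Each piece has bounded bandwidth on its own, and the only cross-piece edges occur at the seams, so the concatenation has bounded bandwidth as well; Lemma~\ref{lem:crossing-control} then multiplies this by a constant. Replacing your interleaving paragraph with this concatenation argument closes the gap.
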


\begin{proof}
We use a reduction from 3SAT as described above, by forming a graph from the grid gadgets, variable gadgets, and clause gadgets described in the previous subsections. The subgraph formed by the sequence of variable gadgets, and the subgraph formed by the sequence of clause gadgets, are connected at an appropriate place (near one corner of the grid) to the subgraph formed by the grid gadget; otherwise these three graphs are disconnected from each other, and interact only through their crossings. The variable and clause gadgets are forced by their crossings with the grid gadget to take zigzagging paths across the grid, so that each variable gadget crosses each clause gadget in a controlled area within the grid.

Each variable gadget may have one of two orientations (given by the crossover gadgets at either of its ends), one of which is associated with a true value of the variable and the other of which is associated with a false value. When a variable gadget crosses a clause gadget that has a term containing that variable, the gated crossover gadget associated with that crossing (consisting of subconfigurations within both the variable and clause gadget) allows the clause gadget to twist at that point, if and only if the variable's truth assignment would satisfy that clause. For variables with the wrong truth assignment for the given clause, or that do not participate in the clause, no such twist is possible. The parts of the graph that connect consecutive clause gadgets are arranged in such a way that they can be embedded in a 1-planar way only if the two ends of the clause gadget have the correct orientations, with a single twist relative to each other.

If the input 3SAT instance has a satisfying assignment, it can be used to choose twists for the crossover and gated crossover gadgets of this reduction in such a way that the entire graph is embedded in a 1-planar way. On the other hand, any embedding of the graph must come from a satisfying truth assignment in this way. The translation described by this reduction can be performed by a polynomial time algorithm, as it involves only putting the correct gadgets together in the correct sequence. Therefore, this translation is a valid many-one reduction from 3SAT to 1-planarity. The graph that results from the reduction consists of a bounded number of bounded-bandwidth pieces (the grid gadget, sequence of variable gadgets, and sequence of clause gadgets), whose bandwidth is expanded by a constant factor due to the reduction from colored 1-planarity to uncolored 1-planarity used to control the pairs of edges that are allowed to cross. Moreover the pieces can be joint keeping the bandwidth bounded: take an arrangement placing the variable zigzag  first, with the green vertices of attachment to the grid at the end, followed with an arrangement of the grid with the green vertices of attachment to the variable zigzag at the beginning and the green vertices of attachment to the clause zigzag at the end, and finish with an arrangement of the clause zigzag.
Therefore, the whole graph has bounded bandwidth overall. 
\end{proof}

\fi 

\end{document}